\title{Stackelberg-Pareto Synthesis with Quantitative Reachability Objectives}
\author{Thomas Brihaye\thanks{Partly supported by the F.R.S.- FNRS under grant n°T.0027.21.}\\
Université de Mons (UMONS), Belgium
\and Véronique Bruyère\thanks{Partly supported by the F.R.S.- FNRS under grant n°T.0023.22.}\\
Université de Mons (UMONS), Belgium 
\and Gaspard Reghem\\
ENS Paris-Saclay, Université Paris-Saclay, France}
\newcommand{\nat}{\mathbb{N}}
\newcommand{\play}{\textsf{Play}}
\newcommand{\hist}{\textsf{Hist}}
\newcommand{\histi}[1]{\textsf{Hist}^{#1}}
\newcommand{\out}{\textsf{out}}
\newcommand{\val}[1]{\textsf{val}(#1)}
\newcommand{\wit}{\textsf{Wit}}
\newcommand{\prefix}{\sqsubseteq}
\NewDocumentCommand{\Last}{m}{\textsf{last}{(#1)}}
\newcommand{\cost}[1]{\textsf{cost}({#1})}
\newcommand{\costT}[2]{\textsf{cost}_{#1}(#2)}
\newcommand{\DimGame}[1]{\textsf{Games}_{#1}}
\newcommand{\length}[1]{\textsf{length}({#1})}
\newcommand{\BinDimGame}[1]{\textsf{BinGames}_{#1}}
\newcommand{\SPS}[2]{\mbox{SPS}({#1},{#2})}
\newcommand{\f}[2]{f({#1},{#2})}
\newcommand{\mem}[1]{\textsf{rec}({#1})}
\newcommand{\punStrat}[1]{\tau^{\textsf{Pun}}_{#1}}
\begin{document}

\maketitle

\begin{abstract}
In this paper, we deepen the study of two-player Stackelberg games played on graphs in which Player~$0$ announces a strategy and Player~$1$, having several objectives, responds rationally by following plays providing him Pareto-optimal payoffs given the strategy of Player~$0$. The Stackelberg-Pareto synthesis problem, asking whether Player~$0$ can announce a strategy which satisfies his objective, whatever the rational response of Player~$1$, has been recently investigated for $\omega$-regular objectives. We solve this problem for weighted graph games and quantitative reachability objectives such that Player~$0$ wants to reach his target set with a total cost less than some given upper bound. We show that it is \textsf{NEXPTIME}-complete, as for Boolean reachability objectives.
\end{abstract}

\section{Introduction}

\emph{Formal verification}, and more specifically \emph{model-checking}, is a branch of computer science which offers techniques to check automatically whether a system is correct~\cite{DBLP:reference/mc/BloemCJ18,DBLP:books/daglib/0007403-2}. This is essential for systems responsible for critical tasks like air traffic management or control of nuclear power plants. Much progress has been made in model-checking both theoretically and in tool development, and the technique is now widely used in industry.

Nowadays, it is common to face more complex systems, called \emph{multi-agent systems}, that are composed of heterogeneous components, ranging from traditional pieces of reactive code, to wholly autonomous robots or human users.  Modelling and verifying such systems is a challenging problem that is far from being solved. One possible approach is to rely on \emph{game theory}, a branch of mathematics that studies mathematical models of interaction between agents and the understanding of their decisions assuming that they are \emph{rational} \cite{Osborne94,vonNeumannMorgenstern44}. Typically, each agent (i.e. player) composing the system has his own objectives or preferences, and the way he manages to achieve them is influenced by the behavior of the other agents. 

Rationality can be formalized in several ways. A famous model of agents’  rational behavior is the concept of \emph{Nash equilibrium} (NE)~\cite{Nas50} in a multiplayer non-zero sum game graph that represents the possible interactions between the players~\cite{Ummels08}. Another model is the one of \emph{Stackelberg games}~\cite{Stackelberg37}, in which one designated player -- the leader, announces a strategy to achieve his goal, and the other players -- the followers, respond rationally with an optimal response depending on their goals (e.g. with an NE). This framework is well-suited for the verification of correctness of a controller intending to enforce a given property, while interacting with an environment composed of several agents each having his own objective. In practical applications, a strategy for interacting with the environment is committed before the interaction actually happens. 

\paragraph{Our contribution.}  In this paper, we investigate the recent concept of two-player Stackelberg games, where the environment is composed of \emph{one player} aiming at satisfying \emph{several objectives}, and its related \emph{Stackelberg-Pareto synthesis} (SPS) problem~\cite{DBLP:journals/corr/abs-2203-01285,DBLP:conf/concur/BruyereRT21}. In this framework, for Boolean objectives, given the strategy announced by the leader, the follower responses rationally with a strategy that ensures him a vector of Boolean payoffs that is \emph{Pareto-optimal}, that is, with a maximal number of satisfied objectives. This setting encompasses scenarios where, for instance, several components of the environment can collaborate and agree on trade-offs. The SPS problem is to decide whether the leader can announce a strategy that guarantees him to satisfy his own objective, whatever the rational response of the follower. 

The SPS problem has been solved in~\cite{DBLP:conf/concur/BruyereRT21} for \emph{$\omega$-regular} objectives. We here solve this problem for weighted game graphs and \emph{quantitative reachability} objectives for both players. Given a target of vertices, the goal is to reach this target with a cost as small as possible. In this quantitative context, the follower responds to the strategy of the leader with a strategy that ensures him a Pareto-optimal cost vector given his series of targets. The aim of the leader is to announce a strategy in a way to reach his target with a total cost less than some given upper bound, whatever the rational response of the follower. We show that the SPS problem is \textsf{NEXPTIME}-complete (Theorem~\ref{thm:main}), as for Boolean reachability objectives.

It is well-known that moving from Boolean objectives to quantitative ones allows to model \emph{richer properties}. This paper is a first step in this direction for the SPS problem for two-player Stackelberg games with multiple objectives for the follower. Our proof follows the same pattern as for Boolean reachability~\cite{DBLP:conf/concur/BruyereRT21}: if there is a solution to the SPS problem, then there is one that is finite-memory whose memory size is at most exponential. The non-deterministic algorithm thus guesses such a strategy and checks whether it is a solution. However, a crucial intermediate step is to prove that if there exists a solution, then there exists one whose Pareto-optimal costs for the follower are \emph{exponential} in the size of the instance (Theorem~\ref{thm:bounded-costs}). The proof of this non trivial step (which is meaningless in the Boolean case) is the main contribution of the paper. Given a solution, we first present some hypotheses and techniques that allow to locally modify it into a solution with smaller Pareto-optimal cost vectors. We then conduct a proof by induction on the number of follower's targets, to globally decrease the cost vectors and to get an exponential number of Pareto-optimal cost vectors. The \textsf{NEXPTIME}-hardness of the SPS problem is trivially obtained by reduction from this problem for Boolean reachability. Indeed, the Boolean version is equivalent to the quantitative one with all weights put to zero and with the given upper bound equal to zero. Notice that the two versions differ: we exhibit an example of game that has a solution to the SPS problem for quantitative reachability, but none for Boolean reachability. 

Our result first appeared in \cite{DBLP:conf/rp/BrihayeBR23}. In this long version, we provide all the proofs.

\paragraph{Related work.}  During the last decade, multiplayer non-zero sum games and their applications to reactive synthesis have raised a growing attention, see for instance the surveys~\cite{DBLP:conf/lata/BrenguierCHPRRS16,DBLP:journals/siglog/Bruyere21,GU08}. When several players (like the followers) play with the aim to satisfy their objectives, several \emph{solution concepts} exist such as NE, subgame perfect equilibrium (SPE)~\cite{selten}, secure equilibria~\cite{DBLP:conf/tacas/ChatterjeeH07,DBLP:journals/tcs/ChatterjeeHJ06}, or admissibility~\cite{Berwanger07,BrenguierRS15}. Several results have been obtained, for Boolean and quantitative objectives, about the constrained existence problem which consists in deciding whether there exists a solution concept such that the payoff obtained by each player is larger than some given threshold. Let us mention~\cite{ConduracheFGR16,Ummels08,UmmelsW11} for results on the constrained existence for NEs and~\cite{DBLP:conf/icalp/BriceRB22,DBLP:conf/csl/BriceRB22,BrihayeBGRB20,Ummels06} for SPEs. Some of them rely on a recent elegant characterization of SPE outcomes~\cite{Raskin2021,DBLP:journals/mor/FleschP17}. 

Stackelberg games with \emph{several followers} have been recently studied in the context of \emph{rational synthesis}: in~\cite{FismanKL10} in a setting where the followers are cooperative with the leader, and later in~\cite{KupfermanPV16} where they are adversarial. Rational responses of the followers are, for instance, to play an NE or an SPE. The rational synthesis problem and the SPS problem are incomparable, as illustrated in~\cite[Section 4.3.2]{phdTamines}: in rational synthesis, each component of the environment acts selfishly, whereas in SPS, the components cooperate in a way to obtain a Pareto-optimal cost. In \cite{DBLP:conf/tacas/KupfermanS22}, the authors solve the rational synthesis problem that consists in deciding whether the leader can announce a strategy satisfying his objective, when the objectives of the players are specified by LTL formulas. Complexity classes for various $\omega$-regular objectives are established in~\cite{ConduracheFGR16} for both cooperative and adversarial settings. Extension to quantitative payoffs, like mean-payoff or discounted sum, is studied in~\cite{GuptaS14,GuptaS14c} in the cooperative setting and in~\cite{BalachanderGR20,FiliotGR20} in the adversarial setting. 

The concept of \emph{rational verification} has been introduced in~\cite{GutierrezNPW20}, where instead of deciding the existence of a strategy for the leader, one verifies that some given leader's strategy satisfies his objective, whatever the NE responses of the followers. An algorithm and its implementation in the EVE system are presented in \cite{GutierrezNPW20} for objectives specified by LTL formulas. This verification problem is studied in~\cite{DBLP:journals/games/GutierrezSW22} for mean-payoff objectives for the followers and an omega-regular objective for the leader, and it is solved in~\cite{DBLP:journals/corr/abs-2301-12913} for both NE and SPE responses of the followers and for a variety of objectives including quantitative objectives. The Stackelberg-Pareto verification problem is solved in~\cite{DBLP:conf/concur/BruyereRT22} for some $\omega$-regular or LTL objectives.

Since our paper~\cite{DBLP:conf/rp/BrihayeBR23}, the rational synthesis and verification problems have been further studied for quantitative reachability objectives in \cite{Christophe}, for both cases of rational responses: NE responses and Pareto-optimal responses.

\paragraph{Structure of the paper.} In Section~\ref{sec:Prelim}, we introduce the concept of Stackelberg-Pareto games with quantitative reachability costs. We also recall several useful related notions. In Section~\ref{sec:bounding}, we show that if there exists a solution to the SPS problem, then there exists one whose Pareto-optimal costs are exponential in the size of the instance. In Section~\ref{sec:complexity}, we prove that the SPS problem is \textsf{NEXPTIME}-complete by using the result of the previous section. Finally, we give a conclusion and some future work. 

\section{Preliminaries and Studied Problem} \label{sec:Prelim}

We introduce the concept of Stackelberg-Pareto games with quantitative reachability costs. We present the related Stackelberg-Pareto synthesis problem and state our main result.

\subsection{Graph Games}


\paragraph{Game arenas.} A \emph{game arena} is a tuple $A = (V, V_0, V_1, E, v_0, w)$ where: \emph{(1)} $(V,E)$ is a finite directed graph with $V$ as set of vertices and $E$ as set of edges (it is supposed that every vertex has a successor), \emph{(2)} $V$ is partitioned as $V_0 \cup V_1$ such that $V_0$ (resp. $V_1$) represents the vertices controlled by Player~0 (resp. Player~1), \emph{(3)} $v_0 \in V$ is the initial vertex, and \emph{(4)} $w \colon E \rightarrow \nat$ is a weight function that assigns a non-negative integer\footnote{Notice that null weights are allowed.} to each edge, such that $W = \max_{e \in E}w(e)$ denotes the maximum weight. An arena $A$ is \emph{binary} if $w(e) \in \{0,1\}$ for all $e \in E$.\medskip


\paragraph{Plays and histories.} 
A \emph{play} in an arena $A$ is an infinite sequence of vertices $\rho = \rho_0\rho_1 \ldots \in V^\omega$ such that $\rho_0 = v_0$ and $(\rho_k,\rho_{k+1}) \in E$ for all $k \in \nat$. \emph{Histories} are finite sequences $h = h_0 \ldots h_k \in V^+$ defined similarly. We denote by $\Last{h}$ the last vertex $h_k$ of the history $h$ and by $|h|$ its length (equal to $k$). Let $\play_A$ denote the set of all plays in $A$, $\hist_A$ the set of all histories in $A$, and $\histi{i}_A$ the set of all histories in $A$ ending on a vertex in $V_i$, $i = 0,1$. The mention of the arena will be omitted when it is clear from the context. If a history $h$ is prefix of a play $\rho$, we denote it by $h \prefix \rho$. Given a play $\rho = \rho_0\rho_1 \ldots$, we denote by $\rho_{\leq k}$ the prefix $\rho_0\ldots \rho_k$ of $\rho$, and by $\rho_{\geq k}$ its suffix $\rho_k \rho_{k+1} \ldots$. We also write $\rho_{[k,\ell]}$ for $\rho_k \ldots \rho_\ell$. The \emph{weight} of $\rho_{[k,\ell]}$ is equal to $w(\rho_{[k,\ell]}) = \Sigma_{j = k}^{\ell-1} w(\rho_j,\rho_{j + 1})$.\medskip


\paragraph{Strategies.}
Let $i \in \{0,1\}$, a \emph{strategy} for Player~$i$ is a function $\sigma_i \colon \histi{i} \rightarrow V$ assigning to each history $h \in \histi{i}$ a vertex $v = \sigma_i(h)$ such that $(\Last{h},v) \in E$. We denote by $\Sigma_i$ the set of all strategies for Player~$i$. We say that a strategy $\sigma_i$ is \emph{memoryless} if for all $h,h' \in \histi{i}$, if $\Last{h} = \Last{h'}$, then $\sigma_i(h) = \sigma_i(h')$. A strategy is considered \emph{finite-memory} if it can be encoded by a Mealy machine and its \emph{memory size} is the number of states of the machine~\cite{2001automata}.\footnote{We assume that the reader is familiar with the concept of finite-memory strategy and memoryless strategy.}

A play $\rho$ is \emph{consistent} with a strategy $\sigma_i$ if for all $k \in \nat$, $\rho_k \in V_i$ implies that $\rho_{k+1} = \sigma_i(\rho_{\leq k})$. Consistency is extended to histories as expected. We denote $\play_{\sigma_i}$ (resp. $\hist_{\sigma_i}$) the set of all plays (resp. histories) consistent with $\sigma_i$. Given a couple of strategies $(\sigma_0,\sigma_1)$ for Players~0 and~1, there exists a single play that is consistent with both of them, that we denote by $\out(\sigma_0,\sigma_1)$ and call the \emph{outcome} of $(\sigma_0,\sigma_1)$.\medskip


\paragraph{Reachability costs.} Given an arena $A$, let us consider a non-empty subset $T \subseteq V$ of vertices called \emph{target}. We say that a play $\rho = \rho_0\rho_1 \ldots$ \emph{visits} the target $T$, if $\rho_k \in T$ for some $k$. We define a \emph{cost function} $\textsf{cost}_T \colon \play \rightarrow \overline{\nat}$, where $\overline{\nat} = \nat \cup \{\infty\}$, that assigns to every play $\rho$ the quantity $\costT{T}{\rho} = \min\{w(\rho_{\leq k}) \mid \rho_k \in T\}$, that is, the weight to the first visit of $T$ if $\rho$ visits $T$, and $\infty$ otherwise. The cost function is extended to histories in the expected way. Note that for two histories $g,h$ such that $g \prefix h$ (i.e. $g = h_0 \ldots h_k$, $h = h_0 \ldots h_\ell$, with $k \leq \ell$), it may happen that $\costT{T}{g} = \infty$ and $\costT{T}{h} < \infty$ because $h$ visits $T$ after $g$. However, $w(g) \leq w(h) = w(g) + w(h_{[k,\ell]})$ because the weight function do not take into account the visit or not to $T$. In the proofs of this paper, weight and cost functions should not be confused.

\subsection{Stackelberg-Pareto Synthesis Problem}


\paragraph{Stackelberg-Pareto games.}
Let $t \in \nat \setminus\{0\}$, a \emph{Stackelberg-Pareto reachability game} (SP game) is a tuple $G = (A,T_0,T_1,\ldots,T_t)$ where $A$ is a game arena and $T_i$ are targets for all $i \in \{0,\ldots,t\}$, such that $T_0$ is Player~0's target and $T_1,\ldots,T_t$ are the $t$ targets of Player~1. When $A$ is binary, we say that $G$ is \emph{binary}. The \emph{dimension} $t$ of $G$ is the number of Player~1's targets, and we denote by $\DimGame{t}$ (resp. $\BinDimGame{t}$) the set of all (resp. binary) SP games with dimension $t$. The notations $\play_G$ and $\hist_G$ may be used instead of $\play_A$ and $\hist_A$.

To distinguish the two players with respect to their targets, we introduce the following terminology. The \emph{cost} of a play $\rho$ is the tuple $\cost{\rho} \in \overline{\nat}^t$ such that $\cost{\rho} = (\costT{T_1}{\rho},\ldots,\costT{T_t}{\rho})$. The \emph{value} of a play $\rho$ is a non-negative integer or $\infty$ defined by $\val{\rho} = \costT{T_0}{\rho}$. The value can be viewed as the score of Player~0 and the cost as the score of Player~1. Both functions are extended to histories in the expected way. In the sequel, given a cost $c \in \overline{\nat}^t$, we denote by $c_i$ the $i$-th component of $c$ and by $c_{min}$ the component of $c$ that is minimum, i.e. $c_{min} = \min \{c_i \mid i \in \{1,\ldots,t\} \}$.

In an SP game, Player~0 wishes to minimize the value of a play with respect to the usual order $<$ on $\nat$ extended to $\overline{\nat}$ such that $n < \infty$ for all $n \in \nat$.  To compare the costs of Player $1$, the following component-wise order is introduced. Let $c, c' \in \overline{\nat}^t$ be two costs, we say that $c \le c'$ if $c_i \le c_i'$ for all $i \in \{1, \ldots, t\}$. Moreover, we write $c < c'$ if $c \le c'$ and $c \ne c'$. Notice that the order defined on costs is not total. Given two plays with respective costs $c$ and $c'$, if $c < c'$, then Player~$1$ prefers the play with lower cost $c$. \medskip


\paragraph{Stackelberg-Pareto synthesis problem.}
Given an SP game and a strategy $\sigma_0$ for Player~0, we consider the set $C_{\sigma_0}$ of costs of plays consistent with $\sigma_0$ that are \emph{Pareto-optimal} for Player~$1$, i.e., minimal with respect to the order $\leq$ on costs. Hence, 
$$C_{\sigma_0} = \min\{\cost{\rho} \mid \rho \in \play_{\sigma_0}\}.$$ 
Notice that $C_{\sigma_0}$ is an antichain. A cost $c$ is said to be $\sigma_0$\emph{-fixed Pareto-optimal} if $c \in C_{\sigma_0}$. Similarly, a play is said to be $\sigma_0$-fixed Pareto-optimal if its cost is $\sigma_0$-fixed Pareto-optimal. We will omit the mention of $\sigma_0$ when it is clear from context.

The problem we study is the following one: given an SP game $G$ and a bound $B \in \nat$, is there a strategy $\sigma_0$ for Player~0 such that, for all strategies $\sigma_1$ for Player~1, if the outcome $\out(\sigma_0,\sigma_1)$ is Pareto-optimal, then the value of the outcome is below $B$. This is equivalent to say that for all $\rho \in \play_{\sigma_0}$, if $\cost{\rho}$ is $\sigma_0$-fixed Pareto-optimal, then $\val{\rho}$ is below $B$.

\bigskip \noindent
{\bf Problem. } 
The \emph{Stackelberg-Pareto Synthesis problem} (SPS problem) is to decide, given an SP game $G$ and a bound $B \in \nat$, whether 
\begin{eqnarray} \label{eq:SPS}
\exists \sigma_0 \in \Sigma_0,  \forall \sigma_1 \in \Sigma_1, \cost{\out(\sigma_0,\sigma_1)} \in C_{\sigma_0} \Rightarrow \val{\out(\sigma_0,\sigma_1)} \le B.
\end{eqnarray}

Any strategy $\sigma_0$ satisfying (\ref{eq:SPS}) is called a \emph{solution} and we denote it by $\sigma_0 \in \SPS{G}{B}$. Our main result is the following theorem.  

\begin{theorem} \label{thm:main}
The SPS problem is \textsf{NEXPTIME}-complete.
\end{theorem}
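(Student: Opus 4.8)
The plan is to treat the two bounds separately. The \emph{lower bound} is a reduction from the SPS problem for Boolean reachability objectives, which is already known to be \textsf{NEXPTIME}-hard~\cite{DBLP:conf/concur/BruyereRT21}: given a Boolean instance over an arena with targets $T_0,T_1,\ldots,T_t$, build the quantitative instance with the same arena and targets, all weights set to $0$, and bound $B=0$. Then $\costT{T_i}{\rho}$ equals $0$ if $\rho$ visits $T_i$ and $\infty$ otherwise, so the cost vector of a play records exactly which of Player~$1$'s objectives it satisfies, and $\leq$-minimality of cost vectors coincides with $\subseteq$-maximality of the satisfied set, i.e.\ with Pareto-optimality in the Boolean sense; likewise $\val{\rho}\le B$ iff $\rho$ visits $T_0$. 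Hence $\sigma_0\in\SPS{G}{B}$ in the quantitative instance iff $\sigma_0$ is a solution of the Boolean one, and the reduction is trivially computable, which gives \textsf{NEXPTIME}-hardness.

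For \emph{membership}, the starting point is Theorem~\ref{thm:bounded-costs}: if $\SPS{G}{B}\ne\emptyset$, there is a solution $\sigma_0$ all of whose Pareto-optimal costs lie in $\{0,\ldots,N\}^t$, where $N$ is exponential in the size of the instance. From this I would derive an \emph{exponential-memory} solution by building an extended arena $A'$ whose vertices are tuples $(v,(f_i,a_i)_{1\le i\le t},(f_0,a_0))$ recording the current vertex $v$ of $A$, and, for each target $T_i$ (resp.\ $T_0$), a bit $f_i$ stating whether $T_i$ has been visited and a counter $a_i\in\{0,\ldots,N,\top\}$ (resp.\ $a_0\in\{0,\ldots,B,\top\}$) holding the weight accumulated up to the first visit of $T_i$, frozen once $f_i=1$ and saturated to $\top$ once it would exceed the threshold; transitions of $A'$ mimic those of $A$ while updating this data. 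Since $t$ is polynomial and $N$, $B$ are at most exponential, $A'$ has exponentially many vertices, and the cost vector and value of a play are determined, up to the thresholds $N$ and $B$, by the sequence of $A'$-vertices it visits; because all Pareto-optimal costs of interest lie in $\{0,\ldots,N\}^t$, this suffices to decide Pareto-optimality and the condition $\val{\cdot}\le B$ from that sequence. A strategy for Player~$0$ that is memoryless on $A'$ is a finite-memory strategy on $A$ of exponential memory, and one then shows that the solution given by Theorem~\ref{thm:bounded-costs} can be replaced by such a memoryless-on-$A'$ strategy, intuitively by merging histories that reach the same $A'$-vertex.

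It then remains to describe the non-deterministic algorithm: guess a finite-memory strategy $\sigma_0$ for Player~$0$ of exponential memory size (a Mealy machine, equivalently a memoryless strategy on $A'$) and verify in deterministic exponential time that $\sigma_0\in\SPS{G}{B}$. In the product of $A'$ with this machine -- of exponential size -- one computes the (exponentially many, since capped at $N$) cost vectors of plays consistent with $\sigma_0$, extracts the antichain $C_{\sigma_0}$ of $\leq$-minimal ones, and checks by a reachability analysis whether some consistent play has cost in $C_{\sigma_0}$ and value strictly greater than $B$; $\sigma_0$ is a solution iff no such play exists. Each step runs in exponential time, so the algorithm witnesses \textsf{NEXPTIME} membership, and together with the hardness this yields \textsf{NEXPTIME}-completeness.

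The crux is Theorem~\ref{thm:bounded-costs}: without an a priori exponential bound on the Pareto-optimal costs there is no way to bound the counters in $A'$, hence neither the memory of the candidate strategy nor the number of cost vectors to inspect during verification. Everything else -- the construction of $A'$, the merging argument, and the verification procedure -- is routine, but the bookkeeping must be carried out carefully to make sure every quantity involved remains singly exponential in the size of the instance.
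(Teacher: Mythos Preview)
Your hardness argument is correct and identical to the paper's. The high-level membership plan (bounded Pareto-optimal costs $\Rightarrow$ exponential finite-memory solution $\Rightarrow$ guess and verify) also matches, and your verification step is essentially the paper's. One small slip: Theorem~\ref{thm:bounded-costs} gives $c'_i \le N$ \emph{or} $c'_i=\infty$, so the Pareto-optimal costs lie in $(\{0,\ldots,N\}\cup\{\infty\})^t$, not $\{0,\ldots,N\}^t$; your $A'$ must distinguish ``$>N$ but finite'' from $\infty$, which it currently does not.

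The genuine gap is the step you label ``routine'': replacing the bounded solution by a memoryless-on-$A'$ strategy ``by merging histories that reach the same $A'$-vertex''. The SPS condition is \emph{not} a per-play winning condition; whether a play counts (i.e.\ is Pareto-optimal) depends on the whole set $\play_{\sigma_0}$. If $h$ and $h'$ share an $A'$-vertex and you redirect $h'$ to behave like $h$, the costs formerly reachable only through $h'$ may disappear from $\{\cost{\rho}\mid \rho\in\play_{\sigma_0}\}$. A play $\pi$ elsewhere whose cost was strictly dominated by such a disappeared cost can then become Pareto-optimal in the new strategy, and nothing prevents $\val{\pi}>B$. So naive merging can destroy the solution, and no standard positional-determinacy result applies, because the objective is global in $\sigma_0$.

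The paper avoids this by a two-stage construction (Proposition~\ref{prop:finiteMemory}). First it \emph{freezes} $C_{\sigma_0}$ via a finite witness tree: one lasso per Pareto-optimal cost, producible with exponential memory thanks to the region decomposition and the absence of removable cycles (Lemma~\ref{lem:cycle}). Second, for every deviation $hv$ off that tree, it replaces $\sigma_{0|hv}$ by a \emph{punishing strategy} that wins the zero-sum game ``$\val{\cdot}\le B$ \textsc{or} $\cost{\cdot}>c$ for some $c\in C_{\sigma_0}$'' on the extended arena; this objective is a disjunction of a reachability and a safety condition, hence positional there (Lemma~\ref{lem:punish}), and depends only on $(v,\mem{hv})$. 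The crucial point is that the zero-sum reduction only works \emph{after} $C_{\sigma_0}$ has been fixed by the witnesses; the resulting finite-memory strategy is not memoryless on your $A'$ because it additionally remembers its position in the witness tree. Your proposal collapses these two stages into one merging step, and that is precisely where it breaks.
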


The non-deterministic algorithm is exponential in the number of targets $t$ and in the size of the binary encoding of the maximum weight $W$ and the bound $B$. The general approach to obtain this \textsf{NEXPTIME}-membership is to show that when there is a solution $\sigma_0 \in \SPS{G}{B}$, then there exists one that is finite-memory and whose memory size is exponential. An important part of this paper is devoted to this proof. Then we show that such a strategy can be guessed and checked to be a solution in exponential time.\medskip


\paragraph{Example.}
To provide a better understanding of the SPS problem, let us solve it on a specific example. The arena $A$ is displayed on Figure~\ref{fig:ex} (left part) where the vertices controlled by Player~0 (resp. Player~$1$) are represented as circles (resp. squares). The weights are indicated only if they are different from 1 (e.g., the edge $(v_0,v_6)$ has a weight of 1). The initial vertex is $v_0$. The target of Player 0 is $T_0 = \{v_3,v_9\}$ and is represented by doubled vertices. Player~\ref{fig:ex} has three targets: $T_1 = \{v_1,v_8\}$, $T_2 = \{v_9\}$ and $T_3 = \{v_2,v_4\}$, that are represented using colors (green for $T_1$, red for $T_2$, blue for $T_3$). Let us exhibit a solution $\sigma_0$ in $\SPS{G}{5}$.

\begin{figure}[t]
\begin{center}
\begin{tikzpicture}[yscale=.6] 
\tikzstyle{carre}=[draw,thick, minimum height=0.5cm, minimum width=0.5cm, inner sep=0pt]
\tikzstyle{rond}=[circle,draw,thick, minimum size=0.5cm, inner sep=0pt]
\tikzstyle{fleche}=[->, >=latex,line width=0.4mm]

\node[carre] (a) at (0,0) {$v_0$};
\node[carre,fill=green] (b) at (2,1) {$v_1$};
\node[rond,fill=cyan] (c) at (4,0) {$v_2$};
\node[rond,double] (e) at (4,2) {$v_3$};
\node[carre] (f) at (6,1) {$v_5$};
\node[rond] (g) at (2,-1) {$v_6$};
\node[carre,fill=cyan] (h) at (6,3) {$v_4$};
\node[rond,fill=green] (i) at (6,-1) {$v_8$};
\node[carre] (j) at (4,-2) {$v_7$};
\node[rond,fill=red,double] (k) at (6,-3) {$v_9$};

\draw[fleche] (-1,0) -- (a);
\draw[fleche] (a) -- (b) node[midway,fill=white]{4};
\draw[fleche] (b) -- (c) node[midway,fill=white]{4};
\draw[fleche] (b) -- (e);
\draw[fleche] (e) -- (f);
\draw[fleche] (e) to[bend right] node[midway,fill=white]{2} (h);
\draw[fleche] (a) -- (g);
\draw[fleche] (h) to[bend right] (e);
\draw[fleche] (g) to[bend right] (j);
\draw[fleche] (j) -- (i);
\draw[fleche] (j) -- (k);

\path (c) edge [loop right, >=latex, line width=0.4mm] (c);
\path (f) edge [loop right, >=latex, line width=0.4mm] (f);
\path (g) edge [loop right, >=latex, line width=0.4mm] (g);
\path (i) edge [loop right, >=latex, line width=0.4mm] (i);
\path (k) edge [loop right, >=latex, line width=0.4mm] (k);


\begin{scope}[xshift=11cm,yshift=-1cm]

\node (aw) at (0,6) {$v_0$};
\node (bw) at (1,4.5) {$v_1$};
\node (cw) at (1,3) {$v_3$};
\node (dw) at (1,1.5) {$v_4$};
\node (ccw) at (1,0) {$v_3$};
\node (ddw) at (1,-1.5) {$v_4$};
\node (dddw) at (1,-3.5) {$\vdots$};
\node (ew) at (-1,4.5) {$v_6$};
\node (fw) at (-1,3) {$v_6$};
\node (gw) at (-1,1.5) {$v_7$};
\node (hw) at (-1,0) {$v_9$};
\node (iw) at (-1,-1.5) {$v_9$};
\node (jw) at (-1,-3.5) {$\vdots$};

\draw[->,>=latex, line width=0.4mm] (aw) -- (bw);
\draw[->,>=latex, line width=0.4mm] (bw) -- (cw);
\draw[->,>=latex, line width=0.4mm] (cw) -- (dw);
\draw[->,>=latex, line width=0.4mm] (dw) -- (ccw);
\draw[->,>=latex, line width=0.4mm] (ccw) -- (ddw);
\draw[->,>=latex, line width=0.4mm] (ddw) -- (dddw);
\draw[->,>=latex, line width=0.4mm] (aw) -- (ew);
\draw[->,>=latex, line width=0.4mm] (ew) -- (fw);
\draw[->,>=latex, line width=0.4mm] (fw) -- (gw);
\draw[->,>=latex, line width=0.4mm] (gw) -- (hw);
\draw[->,>=latex, line width=0.4mm] (hw) -- (iw);
\draw[->,>=latex, line width=0.4mm] (iw) -- (jw);

\end{scope}

\end{tikzpicture}
\end{center}
\caption{Arena $A$ (on the left) -- Witness tree (on the right)}
\label{fig:ex}
\label{example}
\end{figure}

We define $\sigma_0$ as the strategy that always moves from $v_3$ to $v_4$, and that loops once on $v_6$ and then moves to $v_7$. The plays consistent with $\sigma_0$ are $v_0v_1v_2^\omega$, $v_0v_1(v_3v_4)^\omega$, $v_0v_6v_6v_7v_8^\omega$, and $v_0v_6v_6v_7v_9^\omega$. The Pareto-optimal plays are $v_0v_1(v_3v_4)^\omega$ and $v_0v_6v_6v_7v_9^\omega$ with respective costs $(4,\infty,7)$ and $(\infty,4,\infty)$, and they both yield a value less than or equal to 5. Notice that $\sigma_0$ has to loop once on $v_6$, i.e., it is not memoryless\footnote{One can prove that there exists no memoryless solution.}, otherwise the consistent play $v_0v_6v_7v_8^\omega$ has a Pareto-optimal cost of $(3,\infty,\infty)$ and an infinite value. 

Interestingly, the Boolean version of this game does not admit any solution. In this case, given a target, the player's goal is simply to visit it (and not to minimize the cost to reach it). That is, the Boolean version is equivalent to the quantitative one with all weights and the bound $B$ put to zero. In the example, the play $v_0v_1v_2^\omega$ is Pareto-optimal (with visits to $T_1$ and $T_3$), whatever the strategy of Player~$0$, and this play does not visit Player~$0$'s target.\medskip

\subsection{Tools}

We here present two tools used in the proofs for solving the SPS problem.

\paragraph{Witnesses.}
An important tool is the concept of witness~\cite{DBLP:conf/concur/BruyereRT21}. Given a solution $\sigma_0$, for all $c \in C_{\sigma_0}$, we can choose arbitrarily a play $\rho$ called \emph{witness} of the cost $c$ such that $\cost{\rho} = c$. The set of all chosen\footnote{Note that the witness set is not necessarily unique.} witnesses is denoted by $\wit_{\sigma_0}$, whose size is the size of $C_{\sigma_0}$. Since $\sigma_0$ is a solution, the value of each witness is below $B$.
We define the \emph{length} of a witness $\rho$ as the length 
$$\length{\rho} = \min \{ |h| \mid h \prefix \rho \wedge \cost{h} = \cost{\rho} \wedge \val{h} = \val{\rho} \}.$$ 
Hence it is the length of the shortest history $h$ that visits the same targets as $\rho$. The \emph{length} of $\wit_{\sigma_0}$, denoted by $\length{\wit_{\sigma_0}}$, is equal to $\Sigma_{\rho \in \wit_{\sigma_0}} \length{\rho}$. Moreover, given $h \in \hist_{\sigma_0}$, we write $\wit_{\sigma_0}(h)$ the set of witnesses for which $h$ is a prefix, i.e., 
$$\wit_{\sigma_0}(h) = \{\rho \in \wit_{\sigma_0} \mid h \prefix \rho\}.$$ 
Notice that $\wit_{\sigma_0}(h) = \wit_{\sigma_0}$ when $h = v_0$, and that the size of $\wit_{\sigma_0}(h)$ decreases as the size of $h$ increases, until it contains a single play or becomes empty.

It is useful to see the set $\wit_{\sigma_0}$ as a tree composed of $|\wit_{\sigma_0}|$ infinite branches (corresponding to the witnesses). The following notions about this tree  will be useful. We say that a history $h$ is a \emph{branching point} if there are two witnesses whose greatest common prefix is $h$, that is, there exists $v \in V$ such that $0 < |\wit_{\sigma_0}(hv)| < |\wit_{\sigma_0}(h)|$. Given a witness $\rho$, we define the following equivalence relations $\sim$ on histories that are prefixes of $\rho$: 
$$h \sim h' \quad\Leftrightarrow\quad (\val{h},\cost{h},\wit_{\sigma_0}(h)) = (\val{h'},\cost{h'},\wit_{\sigma_0}(h')).$$ Notice that if $h \sim h'$, then either $h \prefix h'$ or $h' \prefix h$ and no new target is visited and no branching point is crossed from the shortest history to the longest one. We call \emph{region} of $h$ its equivalence class. This leads to a \emph{region decomposition} of each witness $\rho$, such that the first region is the region of the initial vertex $v_0$ and the last region is the region of $h \prefix \rho$ such that $|h| = \length{\rho}$. Finally, a \emph{deviation} is a history $hv$ with $h \in \histi{1}$ and $v\in V$, such that $h$ is prefix of some witness, but $hv$ is prefix of no witness. 

We illustrate these different notions on the previous example and its solution $\sigma_0$. A set of witnesses is $\wit_{\sigma_0} = \{v_0v_1(v_3v_4)^\omega, v_0v_6v_6v_7v_9^\omega\}$ depicted on Figure~\ref{fig:ex} (right part). We have that $\length{v_0v_6v_6v_7v_9^\omega} = |v_0v_6v_6v_7v_9|=4$ and $\length{\wit_{\sigma_0}} = 7$. Moreover, $\wit_{\sigma_0}(v_0) = \wit_{\sigma_0}$, $\wit_{\sigma_0}(v_0v_1) = \{v_0v_1(v_3v_4)^\omega\}$, and $\wit_{\sigma_0}(v_0v_1v_2) = \emptyset$. The initial vertex $v_0$ is a branching point, $v_0v_1v_2$ is a deviation, and the region decomposition of the witness $v_0v_6v_6v_7v_9^\omega$ is equal to $\{v_0\}$, $\{v_0v_6,v_0v_6v_6,v_0v_6v_6v_7\}$, $\{v_0v_6v_6v_7v_9^k \mid k \geq 1\}$.  \medskip


\paragraph{Reduction to binary arenas.}
Working with general arenas requires to deal with the parameter $W$ in most of the proofs. To simplify the arguments, we reduce the SPS problem to \emph{binary} arenas, by replacing each edge with a weight $w \geq 2$ by a path of $w$ edges of weight $1$. This (standard) reduction is exponential, but only in the size of the binary encoding of $W$.

\begin{lemma} \label{lem:binary}
Let $G = (A, T_0, \ldots , T_t)$ be an SP game and $B \in \nat$. Then one can construct in exponential time an SP game $G' = (A', T_0, \ldots , T_t)$ with a binary arena $A'$ such that
\begin{itemize}
    \item the set of vertices $V'$ of $A'$ contains $V$ and has size $|V'| \le |V| \cdot W$,
    \item there exists a one-to-one correspondance $g$ between the set of strategies on $A$ and the set of strategies on $A'$. Moreover $\sigma_0$ is a solution in $\SPS{G}{B}$ if and only if $g(\sigma_0)$ is a solution in $\SPS{G'}{B}$.   
    \end{itemize}
\end{lemma}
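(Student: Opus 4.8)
The plan is to perform the standard subdivision of heavy edges and argue that none of the quantities relevant to the SPS problem changes. Concretely, given $A = (V, V_0, V_1, E, v_0, w)$, construct $A'$ as follows: for each edge $e = (u, v) \in E$ with $w(e) = k \geq 2$, remove $e$ and add a chain of $k-1$ fresh intermediate vertices $u = p_0, p_1, \ldots, p_{k-1}, p_k = v$ together with edges $(p_{j}, p_{j+1})$ of weight $1$ for $j = 0, \ldots, k-1$; edges of weight $0$ or $1$ are kept unchanged. The fresh vertices can be placed arbitrarily in $V_0$ or $V_1$ since they have a unique outgoing edge and so the owner has no real choice; for definiteness I would put them in $V_1$ (or, to keep $\histi{1}$ small, in $V_0$ — either works). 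Each original vertex of $V$ is retained, so $V \subseteq V'$, and since each of the at most $|E| \le |V|^2$ edges is replaced by at most $W-1$ new vertices, one gets $|V'| \le |V| + |E|\cdot(W-1)$, which is bounded by $|V|\cdot W$ after absorbing the quadratic factor into the generous statement (or one simply states the coarser polynomial-in-$|V|$, exponential-in-encoding-of-$W$ bound the paper already commits to). The targets $T_0, \ldots, T_t \subseteq V \subseteq V'$ are unchanged, and the bound $B$ is unchanged. This construction is clearly computable in time polynomial in $|V|$, $t$, and $W$, hence exponential only in the binary encoding of $W$.

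\textbf{The correspondence between plays.} Define a map $\pi$ from $\play_{A'}$ to $\play_{A}$ that contracts each maximal run through fresh vertices back to the original edge it came from; conversely, every play $\rho$ in $A$ lifts uniquely to a play $\widehat{\rho}$ in $A'$ by inserting, along each step $(\rho_j, \rho_{j+1})$ with weight $k$, the corresponding chain of $k-1$ intermediate vertices. I would check the following invariants, all of which are immediate from the definition of $w$ on $A'$: (i) $\pi$ and lifting are mutually inverse bijections between $\play_A$ and $\play_{A'}$ (because fresh vertices are deterministic, no genuine branching is added); (ii) weights of prefixes are preserved, i.e. $w(\widehat{\rho}_{\le m}) = w(\rho_{\le k})$ whenever $\widehat{\rho}_{\le m}$ is the lift of $\rho_{\le k}$ ending at $\rho_k \in V$; (iii) since $T_i \subseteq V$, a play $\rho$ in $A$ visits $T_i$ exactly when its lift does, and the weight to the first visit is the same, so $\costT{T_i}{\widehat{\rho}} = \costT{T_i}{\rho}$ for all $i \in \{0, \ldots, t\}$; in particular $\cost{\widehat{\rho}} = \cost{\rho}$ and $\val{\widehat{\rho}} = \val{\rho}$.

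\textbf{From strategies to strategies.} The two invariants above already give the equivalence. If $\sigma_0 \in \SPS{G}{B}$, define $\sigma_0'$ on $A'$ by playing the forced move at every fresh vertex and, at an original vertex $u \in V_0$ reached by a history $h'$, playing the lift of $\sigma_0(\pi(h'))$ — here $\pi(h')$ is the contraction of $h'$, a genuine history of $A$. Since plays of $A'$ consistent with $\sigma_0'$ are exactly the lifts of plays of $A$ consistent with $\sigma_0$, and since $\cost{\cdot}$ is preserved by lifting, the Pareto-optimal set is preserved, $C_{\sigma_0'} = C_{\sigma_0}$, and every Pareto-optimal outcome still has value $\le B$; hence $\sigma_0' \in \SPS{G'}{B}$. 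The reverse direction is symmetric: given $\sigma_0' \in \SPS{G'}{B}$, contract it to $\sigma_0$ on $A$ by $\sigma_0(h) = \pi(\sigma_0'(\widehat{h}))$ for $h \in \histi{0}_A$ — note the move out of $u \in V_0$ chosen by $\sigma_0'$ at $\widehat{h}$ lands on a fresh vertex, but the maximal deterministic run from there ends at a well-defined vertex of $V$, which is the value we assign to $\sigma_0(h)$; again the consistent plays correspond under lifting and all costs/values are preserved, so $\sigma_0 \in \SPS{G}{B}$.

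\textbf{Main obstacle.} There is no deep difficulty; the only points requiring care are bookkeeping ones. First, one must make sure the Player-1 side of the equivalence is handled too — but since lifting is a bijection on \emph{all} plays (not just those consistent with a fixed $\sigma_0$), the universal quantifier $\forall \sigma_1$ in (\ref{eq:SPS}) is automatically respected: the set $\play_{\sigma_0'}$ in $A'$ is precisely $\{\widehat{\rho} \mid \rho \in \play_{\sigma_0}\}$, independent of how $\sigma_1$ is described. Second, one should double-check that placing fresh vertices in $V_1$ does not let Player~1 ``escape'': it cannot, because each fresh vertex has out-degree~$1$. Third, the size bound as literally stated, $|V'| \le |V|\cdot W$, needs the mild observation that one need only subdivide edges that are actually present and of weight $\ge 2$; if one wants the clean bound one can instead subdivide into exactly $\max(1, w(e))$ pieces and note $\sum_{e} w(e) \le |E| \cdot W$, then fold the $|E| \le |V|^2$ into a slightly larger but still polynomial estimate, or simply weaken the statement — the subsequent complexity analysis only uses that $|V'|$ is polynomial in the instance size with the blow-up exponential in the encoding of $W$. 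I would present the construction, state the bijection $\rho \leftrightarrow \widehat{\rho}$, list invariants (i)–(iii), and conclude both implications in a few lines.
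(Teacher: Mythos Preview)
Your proposal is correct and follows essentially the same approach as the paper: subdivide each edge of weight $k \ge 2$ into a path of $k$ unit-weight edges, observe that the resulting play bijection preserves value and cost, and transfer strategies in both directions. The paper's proof is slightly terser (it states the existence of a play bijection $f$ and a strategy bijection $g$ with $\out(g(\sigma_0),g(\sigma_1)) = f(\out(\sigma_0,\sigma_1))$ and concludes directly), whereas you spell out the lifting/contraction and the invariants (i)--(iii) explicitly; you are also more careful than the paper about the size bound, correctly noting that the literal inequality $|V'| \le |V|\cdot W$ hides an $|E|$ factor but that only a polynomial-in-$|V|$, exponential-in-$\log W$ bound is needed downstream.
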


\begin{proof}
Let $X = \{e \in E \mid w(e) \geq 2 \}$ be the set of edges of $A$ with weight at least 2. For each $e \in X$, we replace $e$ by a succession of $w(e)-1$ new vertices linked by $w(e)$ new edges of weight 1, and for each $e \in E \setminus X$, we keep $e$ unmodified. In this way, we get a directed graph $(V',E')$ with a weight function $w' \colon E' \rightarrow \{0,1\}$ and with a size $|V'| \le |V| \cdot W$. Notice that if $X = \emptyset$, the arena is already binary. The new arena $A' = (V',V'_0,V'_1,E',v_0,w')$ has the same initial vertex $v_0$ as $A$, and a partition $V'_0 \cup V'_1$ of $V'$ such that each new vertex is added\footnote{Each new vertex could be added to $V_1$ as it has a unique successor.} to $V_0$, hence $V_0 \subseteq V'_0$, $V_1 = V'_1$. The new SP game $G' = (A', T_0, \ldots , T_t)$ keeps the same targets as in $G$.  

 Clearly, there is a trivial bijection $f$ from $\play_A$ to $\play_{A'}$. Indeed, it suffices to replace all edges $e$ of a play with weight $w(e) \geq 2$ by the corresponding new path composed of $w(e)$ edges of weight 1. This bijection preserves the cost and the value of the plays. Moreover, there also exists a bijection $g$ from the set of strategies on $A$ to the set of strategies on $A'$ as each new vertex has a unique successor. Notice that $g$ is coherent with $f$, i.e.,  $\out(g(\sigma_0),g(\sigma_1)) = f(\out(\sigma_0,\sigma_1))$ for all $\sigma_0 \in \Sigma_0$ and all $\sigma_1 \in \Sigma_1$. Therefore, for all $\sigma_0 \in \Sigma_0$, we get that $\sigma_0 \in \SPS{G}{B}$ if and only if $g(\sigma_0) \in \SPS{G'}{B}$. 
%
\end{proof}

The transformation of the arena $A$ into a binary arena $A'$ has consequences on the size of the SPS problem instance. Since the weights are encoded in binary, the size $|V'|$ could be exponential in the size $|V|$ of the original instance. However, this will have \emph{no impact} on our main result because $|V|$ never appears in the exponent in our calculations (this will be detailed in the proof of Theorem~\ref{thm:main}). 

Note that along any history $h = h_0 \ldots h_\ell$ in a binary arena, the total weight increases by a weight of 0 or 1 from $w(h_0)$ to $w(h)$, that is $\{w(h_{[0, k]}) \mid 0 \leq k \leq \ell \} = \{0,1, \ldots, w(h)\}$. This property will be used in some proofs.

\section{Improving a Solution}

In order to solve efficiently the SPS problem, we would like to work with solutions resulting in small costs. To do so, in this section, we introduce some techniques that allow to modify a solution to the SPS problem into a solution with smaller Pareto-optimal costs.

\subsection{Order on Strategies and Subgames}

Given two strategies $\sigma_0, \sigma_0'$ for Player~0, we define $\sigma_0' \preceq \sigma_0$ if for all $c \in C_{\sigma_0}$, there exists $c' \in C_{\sigma_0'}$ such that $c' \le c$, and we say that $\sigma'_0$ is \emph{better} than $\sigma_0$. This relation $\preceq$ on strategies is a preorder (it is reflexive and transitive). 
Given two strategies $\sigma_0, \sigma'_0$ such that $\sigma'_0 \preceq \sigma_0$, we define $\sigma'_0 \prec \sigma_0$ when $C_{\sigma'_0} \ne C_{\sigma_0}$, and 
$\sigma'_0 \simeq \sigma_0$ when $C_{\sigma'_0} = C_{\sigma_0}$. In the sequel, we  modify solutions $\sigma_0$ to the SPS problem to get better solutions $\sigma'_0 \preceq \sigma_0$, and we say that $\sigma'_0$ \emph{improves} the given solution $\sigma_0$. 

A \emph{subgame} of an SP game $G$ is a couple $(G,h)$, denoted $G_{|h}$, where $h \in \hist$. In the same way that $G$ can be seen as the set of its plays, $G_{|h}$ is seen as the restriction of $G$ to plays with the prefix $h$ (see Figure~\ref{fig:subgame}). In particular, we have $G_{|v_0} = G$ where $v_0$ is the initial vertex of $G$. The value and cost of a play $\rho$ in $G_{|h}$ are the same as those of $\rho$ as a play in $G$. The \emph{dimension} of $G_{|h}$ is the dimension of $G$ minus the number of targets visited\footnote{Notice that we do not include $\Last{h}$ in $h'$, as it can be seen as the initial vertex of $G_{|h}$.} by $h'$ such that $h'\Last{h} = h$.

\begin{figure}
    \centering
    \begin{tikzpicture}[x=0.75pt,y=0.75pt,yscale=-1,xscale=1,scale=1, every node/.style={scale=1}]
    \draw [color={rgb, 255:red, 0; green, 0; blue, 0 }  ,draw opacity=1 ][line width=1.5]    (346.39,27.11) -- (464.29,233.03) ;
    \draw [line width=1.5]    (322.27,27.11) -- (202.78,234.23) ;
    \draw  [line width=1.5]  (346.39,27.24) .. controls (346.39,20.6) and (341,15.21) .. (334.36,15.21) -- (334.3,15.21) .. controls (327.66,15.21) and (322.27,20.6) .. (322.27,27.24) -- (322.27,27.24) .. controls (322.27,33.89) and (327.66,39.28) .. (334.3,39.28) -- (334.36,39.28) .. controls (341,39.28) and (346.39,33.89) .. (346.39,27.24) -- cycle ;
    \draw  [line width=0.75] [line join = round][line cap = round] (337.8,136.95) .. controls (337.07,145.24) and (333.86,154) .. (335.6,161.82) .. controls (338.22,173.62) and (347.49,178.52) .. (349.71,190.58) .. controls (351.42,199.91) and (347.36,209.76) .. (346.18,219.34) ;
    \draw    (345,130.1) -- (397.38,228.05) ;
    \draw    (331.13,130.1) -- (280.95,227.65) ;
    \draw   (331.13,130.1) .. controls (331.13,126.32) and (334.24,123.25) .. (338.07,123.25) .. controls (341.9,123.25) and (345,126.32) .. (345,130.1) .. controls (345,133.88) and (341.9,136.94) .. (338.07,136.94) .. controls (334.24,136.94) and (331.13,133.88) .. (331.13,130.1) -- cycle ;
    \draw  [line width=0.75] [line join = round][line cap = round] (334.87,39.93) .. controls (311.76,65.99) and (352.11,95.23) .. (338.12,122.83) ;
    \draw  [line width=0.75] [line join = round][line cap = round] (332.03,76.94) .. controls (325.69,103.46) and (297.61,118.7) .. (281.79,138.63) .. controls (263.73,161.37) and (281.79,200.08) .. (255.95,219.16) ;

    \draw (325.73,22.7) node [anchor=north west][inner sep=0.75pt]  [font=\normalsize] [align=left] {$v_{0}$};
    \draw (345.72,77.2) node [anchor=north west][inner sep=0.75pt]  [font=\normalsize] [align=left] {$h$};
    \draw (336.22,220.33) node [anchor=north west][inner sep=0.75pt]  [font=\normalsize] [align=left] {$\rho$};
    \draw (244.28,221.42) node [anchor=north west][inner sep=0.75pt]  [font=\normalsize] [align=left] {$\rho'$};
    \draw (455.53,179.59) node [anchor=north west][inner sep=0.75pt]  [font=\normalsize] [align=left] {$G$};
    \draw (402.73,219.35) node [anchor=north west][inner sep=0.75pt]  [font=\normalsize] [align=left] {$G_{|h}$};
    \end{tikzpicture}
    
    \caption{A game $G$ seen as the set of its plays (containing $\rho$ and $\rho'$), and its subgame $G_{|h}$}
    \label{fig:subgame}
\end{figure}

A strategy for Player~$0$ on $G_{|h}$ is a strategy $\tau_0$ that is only defined for the histories $h' \in \hist$ such that $h \prefix h'$. We denote $\Sigma_{0|h}$ the set of those strategies. Given a strategy $\sigma_0$ for Player~0 in $G$ and $h \in \hist_{\sigma_0}$, we denote the restriction of $\sigma_0$ to $G_{|h}$ by the strategy $\sigma_0{}_{|h}$. Moreover, given $\tau_0 \in \Sigma_{0|h}$, we can define a new strategy $\sigma_0[h\rightarrow\tau_0]$ from $\sigma_0$ as the strategy on $G$ which consists in playing the strategy $\sigma_0$ everywhere, except in the subgame $G_{|h}$ where $\tau_0$ is played. That is, $\sigma_0[h\rightarrow\tau_0](h') = \sigma_0(h')$ if $h \not\prefix h'$, and $\sigma_0[h\rightarrow\tau_0](h') = \tau_0(h')$ otherwise.

As done with $\SPS{G}{B}$, we denote by $\SPS{G_{|h}}{B}$ the set of all solutions $\tau_0 \in \Sigma_{0|h}$ to the SPS problem for the subgame $G_{|h}$ and the bound $B$.

\subsection{Improving a Solution in a Subgame}

A natural way to improve a strategy is to improve it on a subgame. Moreover, if it is a solution to the SPS problem, it is also the case for the improved strategy. 

\begin{lemma} \label{lem:improve}
Let $G$ be an SP game, $B \in \nat$, and $\sigma_0 \in  \SPS{G}{B}$ be a solution. Consider a history $h \in \hist_{\sigma_0}$ and a strategy $\tau_0 \in \Sigma_{0|h}$ in the subgame $G_{|h}$ such that $\tau_0 \prec \sigma_{0|h}$ and $\tau_0 \in \SPS{G_{|h}}{B}$. Then the strategy $\sigma_0' = \sigma_0[h \rightarrow \tau_0]$ is a solution in $\SPS{G}{B}$ and $\sigma_0' \prec \sigma_0$.
\end{lemma}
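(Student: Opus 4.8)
The plan is to show the two assertions separately: first that $\sigma_0'$ is a solution in $\SPS{G}{B}$, and then that $\sigma_0' < \sigma_0$ in the preorder on strategies. The key observation underlying both parts is that the set of plays consistent with $\sigma_0' = \sigma_0[h \to \tau_0]$ splits naturally into two groups: those plays $\rho$ that do \emph{not} have $h$ as a prefix, which are exactly the plays consistent with $\sigma_0$ that avoid $h$; and those plays $\rho$ with $h \prefix \rho$, which are exactly the plays of the form given by $h$ followed by a play consistent with $\tau_0$ in the subgame $G_{|h}$. Since values and costs of plays in $G_{|h}$ agree with their values and costs in $G$, this gives a clean description of $\play_{\sigma_0'}$ and hence a way to compute $C_{\sigma_0'}$ from $C_{\sigma_0{}_{|h}}$ (equivalently $C_{\tau_0}$) and from the costs of the $h$-avoiding consistent plays of $\sigma_0$.

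\textbf{Membership in $\SPS{G}{B}$.} First I would take an arbitrary $\rho \in \play_{\sigma_0'}$ whose cost is $\sigma_0'$-fixed Pareto-optimal, i.e. $\cost{\rho} \in C_{\sigma_0'}$, and argue $\val{\rho} \le B$. If $h \not\prefix \rho$, then $\rho \in \play_{\sigma_0}$; I would want to argue that $\cost{\rho}$ is still Pareto-optimal among the plays consistent with $\sigma_0$, or at least that some play of $\sigma_0$ with cost $\le \cost{\rho}$ and the same (or smaller) value is $\sigma_0$-fixed Pareto-optimal — here the hypothesis $\tau_0 < \sigma_{0|h}$, which says every cost in $C_{\sigma_0{}_{|h}}$ is dominated by a cost in $C_{\tau_0}$, is what guarantees that passing from $\sigma_0$ to $\sigma_0'$ can only lower costs in the $h$-subtree, so a Pareto-optimal cost of $\sigma_0'$ outside the $h$-subtree is witnessed by a cost that was Pareto-optimal (or dominated by a Pareto-optimal one) in $\sigma_0$, and then $\sigma_0 \in \SPS{G}{B}$ gives the bound $B$. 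If instead $h \prefix \rho$, then $\rho = h''\rho'$ appropriately and $\rho$ restricted to $G_{|h}$ is consistent with $\tau_0$; I would argue its cost is Pareto-optimal in $G_{|h}$ with respect to $\tau_0$ (again using that costs outside the subtree can only have decreased, so if $\cost{\rho}$ were dominated within $G_{|h}$ it would be dominated in $G$, contradicting $\cost{\rho} \in C_{\sigma_0'}$), and then $\tau_0 \in \SPS{G_{|h}}{B}$ yields $\val{\rho} \le B$. A small point to handle is the case where targets of Player~$1$ are visited already along $h$: their contribution to the cost is fixed by $h$ and common to both the $G$-view and the $G_{|h}$-view, so the comparison of costs is unaffected.

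\textbf{Strict improvement $\sigma_0' < \sigma_0$.} For $\sigma_0' \le \sigma_0$ I would take $c \in C_{\sigma_0}$ and produce $c' \in C_{\sigma_0'}$ with $c' \le c$: the witnessing play of $c$ either avoids $h$ (then it is consistent with $\sigma_0'$ too, and some Pareto-optimal cost of $\sigma_0'$ lies below it) or passes through $h$ (then use $\tau_0 \le \sigma_{0|h}$ inside the subgame to find a smaller cost realised by a $\tau_0$-play, hence by a $\sigma_0'$-play, and again take a Pareto-optimal cost below that). For strictness, $C_{\sigma_0'} \ne C_{\sigma_0}$, I would use $\tau_0 < \sigma_{0|h}$: there is a cost in $C_{\sigma_0{}_{|h}}$ that is \emph{strictly} improved in $C_{\tau_0}$ (no cost of $C_{\tau_0}$ equals it while some lies strictly below), and I would track this witness through the subtree decomposition to conclude that $C_{\sigma_0'}$ cannot coincide with $C_{\sigma_0}$ — either a cost of $C_{\sigma_0}$ is no longer present, or a strictly smaller cost appears.

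\textbf{Main obstacle.} I expect the delicate step to be the Pareto-optimality bookkeeping in the membership part: $C_{\sigma_0'}$ is an antichain obtained by taking the $\le$-minimal elements of the union of (the $h$-avoiding costs of $\sigma_0$) and (the costs of $\tau_0$-plays in $G_{|h}$), and one must be careful that a cost which is Pareto-optimal for $\sigma_0'$ is "traceable" back to a cost bounding a $\sigma_0$-fixed-Pareto-optimal cost (for the $h$-avoiding case) or to a $\tau_0$-fixed-Pareto-optimal cost (for the $h$-through case), so that the respective solution hypotheses can be invoked. The inequalities $\tau_0 \le \sigma_{0|h}$ and $\tau_0 < \sigma_{0|h}$ are exactly tailored to make this monotonicity argument go through, and the reduction to binary arenas (Lemma~\ref{lem:binary}) plays no role here — the statement is purely about the combinatorics of consistent plays and their cost/value tuples.
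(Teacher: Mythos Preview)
Your decomposition into $h$-avoiding versus $h$-through plays is exactly the paper's approach, and the key ideas are all present. The one organizational point worth flagging is that the paper proves $\sigma_0' \le \sigma_0$ \emph{first} and then uses it inside the membership argument, which makes the delicate $h$-avoiding case clean: for $\rho \in \play_{\sigma_0'}$ with $\cost{\rho} \in C_{\sigma_0'}$ and $h \not\prefix \rho$, one has $\rho \in \play_{\sigma_0}$, so some $c \in C_{\sigma_0}$ satisfies $c \le \cost{\rho}$; then the already-established $\sigma_0' \le \sigma_0$ gives $c'' \in C_{\sigma_0'}$ with $c'' \le c$, forcing $c'' = c = \cost{\rho}$ by the antichain property, so $\cost{\rho} \in C_{\sigma_0}$ for the \emph{same} play $\rho$ and $\val{\rho} \le B$ follows directly. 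Your ordering (membership first) would force you to re-derive this inline, and your fallback option (``some play of $\sigma_0$ with cost $\le \cost{\rho}$ and the same or smaller value is $\sigma_0$-fixed Pareto-optimal'') does not work: a play with strictly smaller cost need not have smaller value, so you cannot transfer the bound $B$ along a different play --- you really need $\cost{\rho}$ itself to lie in $C_{\sigma_0}$, applied to $\rho$ itself. Establishing $\sigma_0' \le \sigma_0$ first makes this immediate and removes the uncertainty you flag in your ``Main obstacle'' paragraph.
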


\begin{proof}
Let us first prove that $\sigma_0' \preceq \sigma_0$, that is, for all $c \in C_{\sigma_0}$, there exists $c' \in C_{\sigma_0'}$ such that $c' \le c$. Let $c \in C_{\sigma_0}$ and $\rho \in \play_{G,\sigma_0}$ be such that $\cost{\rho} = c$. 
\begin{itemize}
\item If $h \not\prefix \rho$, then $\rho$ is also consistent with $\sigma_0'$, thus there exists $c' \in C_{\sigma_0'}$ such that $c' \le c = \cost{\rho}$ by definition of $C_{\sigma_0'}$. \item Otherwise, $h \prefix \rho$. Hence $\rho$ is a play in the subgame ${G_{|h}}$ with a Pareto-optimal cost $c \in C_{\sigma_{0|h}}$. By hypothesis, we have $\tau_0 \prec \sigma_{0|h}$, therefore there exists $c' \in C_{\tau_0}$ such that $c' \le c$ (for some $c$, the inequality is strict: $c' < c$). As $c' \in C_{\tau_0}$, there exists $\rho'  \in \play_{G_{|h}, \tau_0}$ such that $h \prefix \rho'$ and $\cost{\rho'}=c'$. By definition of $\sigma'_0$, we also have that $\rho' \in \play_{G, \sigma'_0}$. Hence, by definition of $C_{\sigma_0'}$, there exists $c'' \in C_{\sigma'_0}$ such that $c'' \leq c'$, and thus $c'' \leq c$. We have thus proved that $\sigma_0' \preceq \sigma_0$.
\end{itemize}
Notice that we have $\sigma_0' \prec \sigma_0$, by the arguments of the second item, as $\tau_0 \prec \sigma_{0|h}$. 

Let us now show that $\sigma'_0 \in  \SPS{G}{B}$. Let $\rho' \in \play_{G,\sigma_0'}$ be such that $c' = \cost{\rho'} \in C_{\sigma_0'}$. We have to prove that $\val{\rho'} \leq B$. If $h \not\prefix \rho'$, then $\rho' \in \play_{G,\sigma_0}$, thus there exists $c \in C_{\sigma_0}$ such that $c \le c'$ by definition of $C_{\sigma_0}$. Since $\sigma_0' \preceq \sigma_0$ by the first part of the proof, it follows that $c = c' = \cost{\rho'} \in C_{\sigma_0}$. Now, recall that $\sigma_0$ is a solution in SPS($G$,$B$), implying that $\val{\rho'} \le B$. If $h \prefix \rho'$, then $\rho' \in \play_{G_{|h},\tau_0}$. As $c' \in C_{\sigma_0'}$, we have $c' \in C_{\tau_0}$ (it is not possible to have $c'' \in C_{\tau_0}$ such that $c'' < c'$ by definition of $\sigma'_0$). As $\tau_0 \in \SPS{G_{|h}}{B}$, it follows that $\val{\rho'} \le B$. In any case, $\val{\rho'} \le B$ showing that $\sigma_0'$ is a solution to the SPS problem in SPS($G$,$B$).
\end{proof}

\subsection{Eliminating a Cycle in a Witness}

Another way to improve solutions to the SPS problem is to delete some particular cycles occurring in witnesses as explained in the next lemma.

\begin{lemma} \label{lem:cycle}
Let $G$ be an SP game, $B \in \nat$, and $\sigma_0 \in \SPS{G}{B}$ be a solution. Suppose that in a witness $\rho = \rho_0\rho_1 \ldots \in \wit_{\sigma_0}$, there exist $m, n \in \nat$ such that 
\begin{itemize}
    \item $m < n < \length{\rho}$ and $\rho_m = \rho_n$,
    \item $\rho_{\leq m}$ and $\rho_{\leq n}$ belong to the same region, and
    \item if $\val{\rho_{\leq m}} = \infty$, then the weight $w(\rho_{[m,n]})$ is null.
\end{itemize}
Let $\sigma_0'$ be the strategy defined such that $\sigma'_0(h) = \sigma_0(h)$ if $\rho_{\leq m} \not\prefix h$, and $\sigma'_0(h) = \sigma_0(\rho_{\leq n}h')$ if $h = \rho_{\leq m}h'$. Then $\sigma'_0$ is a solution in $\SPS{G}{B}$ such that $\sigma_0' \preceq \sigma_0$. Moreover, if $\sigma_0' \simeq \sigma_0$, then $\sigma'_0$ has a witness set $\wit_{\sigma'_0}$ such that $\length{\wit_{\sigma'_0}} < \length{\wit_{\sigma_0}}$.
\end{lemma}

Given a witness $\rho$, the first condition of the lemma means that $\rho_{[m,n]}$ is a cycle and that it appears before the last visit of a target by $\rho$. The second one says that $\rho_{\leq m} \sim \rho_{\leq n}$, i.e., no new target is visited and no branching point is crossed from history $\rho_{\leq m}$ to history $\rho_{\leq n}$. The third one says that if $\rho_{\leq m}$ does not visit Player~$0$'s target, then the cycle $\rho_{[m,n]}$ must have a null weight. The new strategy $\sigma'_0$ is obtained from $\sigma_0$ by playing after $\rho_{\leq m}$ as playing after $\rho_{\leq n}$ (thus deleting the cycle $\rho_{[m,n]}$). See Figure~\ref{fig:cycle}.

\begin{figure}
    \centering
        \begin{tikzpicture}[x=0.75pt,y=0.75pt,yscale=-1,xscale=1,scale=1,every node/.style={scale=1}]
    \draw [color={rgb, 255:red, 0; green, 0; blue, 0 }  ,draw opacity=1 ][line width=1.5]    (346.39,27.24) -- (452.37,213.27) ;
    \draw [line width=1.5]    (322.27,27.11) -- (217.47,213.03) ;
    \draw  [line width=1.5]  (346.39,27.24) .. controls (346.39,20.6) and (341,15.21) .. (334.36,15.21) -- (334.3,15.21) .. controls (327.66,15.21) and (322.27,20.6) .. (322.27,27.24) -- (322.27,27.24) .. controls (322.27,33.89) and (327.66,39.28) .. (334.3,39.28) -- (334.36,39.28) .. controls (341,39.28) and (346.39,33.89) .. (346.39,27.24) -- cycle ;
    \draw  [line width=0.75] [line join = round][line cap = round] (339.75,152.86) .. controls (339.02,159.22) and (335.81,165.94) .. (337.55,171.93) .. controls (340.17,180.97) and (349.44,184.73) .. (351.66,193.97) .. controls (353.37,201.12) and (349.31,208.67) .. (348.13,216.01) ;
    \draw    (347.8,89.3) -- (359.15,110.52) -- (401.47,187.67) ;
    \draw    (324.33,90.5) -- (274.15,188.05) ;
    \draw  [line width=0.75]  (347.92,85.68) .. controls (347.92,79.03) and (342.54,73.64) .. (335.89,73.64) -- (335.84,73.64) .. controls (329.19,73.64) and (323.8,79.03) .. (323.8,85.68) -- (323.8,85.68) .. controls (323.8,92.32) and (329.19,97.71) .. (335.84,97.71) -- (335.89,97.71) .. controls (342.54,97.71) and (347.92,92.32) .. (347.92,85.68) -- cycle ;
    \draw  [line width=0.75] [line join = round][line cap = round] (333.3,38.63) .. controls (328.89,51.87) and (325.43,61.79) .. (332.9,74.23) ;
    \draw  [line width=0.75] [line join = round][line cap = round] (337.3,97.43) .. controls (344.22,105.34) and (347.97,119.28) .. (342.9,129.43) ;
    \draw    (351.29,144.5) -- (383.53,204.88) ;
    \draw    (327.82,146.1) -- (297.28,207.38) ;
    \draw  [line width=0.75]  (351.41,140.88) .. controls (351.41,134.23) and (346.02,128.84) .. (339.38,128.84) -- (339.32,128.84) .. controls (332.68,128.84) and (327.29,134.23) .. (327.29,140.88) -- (327.29,140.88) .. controls (327.29,147.52) and (332.68,152.91) .. (339.32,152.91) -- (339.38,152.91) .. controls (346.02,152.91) and (351.41,147.52) .. (351.41,140.88) -- cycle ;
    \draw [color={rgb, 255:red, 0; green, 0; blue, 0 }  ,draw opacity=0 ][fill={rgb, 255:red, 155; green, 155; blue, 155 }  ,fill opacity=0.32 ][line width=0.75] [line join = round][line cap = round]   (323.82,92.76) .. controls (312.78,112.08) and (300.53,136.83) .. (292.81,152.86) .. controls (288.04,162.2) and (281.28,174.63) .. (275.78,185.88) .. controls (269.78,199.13) and (282.21,202.63) .. (285.31,205.11) .. controls (289.61,207.44) and (292.81,207.69) .. (295.25,206.74) .. controls (301.19,204.43) and (302.69,195.01) .. (304.81,191.11) .. controls (312.81,173.86) and (319.81,161.36) .. (327.82,146.1) .. controls (328.57,143.17) and (327.28,141.13) .. (328.03,135.88) .. controls (332.99,130.41) and (337.28,126.78) .. (344.06,129.61) .. controls (347.8,132.7) and (352.5,139.32) .. (351.29,144.5) .. controls (354.53,150.38) and (355.03,151.38) .. (357.31,156.11) .. controls (364.78,169.38) and (381.16,200.18) .. (383.03,202.63) .. controls (384.75,204.88) and (389.53,201.38) .. (397.03,195.88) .. controls (407.28,186.63) and (393.53,175.38) .. (389.49,167.1) .. controls (381.78,152.83) and (373.77,137.77) .. (365.78,123.33) .. controls (362.99,118.29) and (351.53,96.83) .. (347.8,89.3) .. controls (346.28,93.83) and (346.03,93.58) .. (341.53,96.58) .. controls (335.53,98.08) and (329.78,97.58) .. (324.33,90.5) ;

    \draw (325.73,22.7) node [anchor=north west][inner sep=0.75pt]  [font=\normalsize] [align=left] {$v_{0}$};
    \draw (339.82,220.33) node [anchor=north west][inner sep=0.75pt]  [font=\normalsize] [align=left] {$\rho$};
    \draw (451.53,178.39) node [anchor=north west][inner sep=0.75pt]  [font=\normalsize] [align=left] {$G$};
    \draw (325.52,80.88) node [anchor=north west][inner sep=0.75pt]  [font=\normalsize] [align=left] {$\rho_{m}$};
    \draw (330.5,135.58) node [anchor=north west][inner sep=0.75pt]  [font=\normalsize] [align=left] {$\rho_{n}$};
    \end{tikzpicture}
    \caption{Illustration of Lemma~\ref{lem:cycle}}
    \label{fig:cycle}
\end{figure}

From now on, we say that we can \emph{eliminate cycles} according to this lemma\footnote{These are the cycles satisfying the lemma, and not just any cycle.} without explicitly building the new strategy. We also say that a solution $\sigma_0$ is \emph{without cycles} if it does not satisfy the hypotheses of Lemma~\ref{lem:cycle}, i.e., if it is impossible to eliminate cycles to get a better solution.

\begin{proof}[Proof of Lemma~\ref{lem:cycle}] 
Let $g = \rho_{\leq m}$ and $h = \rho_{\leq n}$. We introduce the following notation: for each play $\pi = \pi_0 \pi_1 \ldots \in \play_G$ such that $h \prefix \pi$, we denote by $\bar \pi$ the play $g \pi_{\geq n+1}$, that is, we delete the cycle $\rho_{[m,n]}$ in $\pi$ (we only keep the vertex $\rho_m$). 

We first prove that $\sigma'_0 \preceq \sigma_0$. Let $c \in C_{\sigma_0}$ and $\pi \in \wit_{\sigma_0}$ be a witness with $\cost{\pi} = c$. Let us prove that there exists $c' \in C_{\sigma'_0}$ such that $c' \leq c$. If $g \not\prefix \pi$, then $\pi \in \play_{\sigma'_0}$ by definition of $\sigma'_0$, and thus there exists $c' \in C_{\sigma'_0}$ such that $c' \leq c$ by definition of $C_{\sigma'_0}$. If $g \prefix \pi$, as $g \sim h$ by hypothesis, we have that $\wit_{\sigma_0}(g) = \wit_{\sigma_0}(h)$ and no new target is visited from $g$ to $h$.  
It follows that $h \prefix \pi$ and $\bar \pi \in \play_{\sigma'_0}$ with $\cost{\bar \pi} = \cost{\pi} - w(\rho_{[m,n]}) \leq \cost{\pi}$. By definition of $C_{\sigma'_0}$, there exists $c' \in C_{\sigma'_0}$ such that $c' \leq \cost{\bar \pi}$, and therefore $c' \leq \cost{\pi} = c$. Hence $\sigma'_0 \preceq \sigma_0$. 

We then prove that, if $\sigma'_0 \simeq \sigma_0$ (that is $C_{\sigma'_0} = C_{\sigma_0}$), then $\sigma'_0$ has a witness set $\wit_{\sigma'_0}$ such that $\length{\wit_{\sigma'_0}} < \length{\wit_{\sigma_0}}$. Note that we necessarily have $w(\rho_{[m,n]} = 0)$, as $C_{\sigma'_0} = C_{\sigma_0}$. Let us show that the set 
$$\textsf{W} = \{\pi \mid \pi \in  \wit_{\sigma_0} \mbox{ and } g \not\prefix \pi \} \cup \{\bar \pi \mid \pi \in  \wit_{\sigma_0} \mbox{ and } g \prefix \pi \}$$ 
is the required witness set $\wit_{\sigma'_0}$.
\begin{itemize}
\item First note that in this set $\textsf{W}$, $g \prefix \pi$ implies $h \prefix \pi$ (as already explained in the first part of the proof). Hence, the number of plays in $\textsf{W}$ is equal to $|\wit_{\sigma_0}|$ and their costs constitute the $C_{\sigma_0} = C_{\sigma'_0}$. Moreover, by definition of $\sigma'_0$, $\textsf{W}$ is composed of plays consistent with $\sigma'_0$. 
\item Second, let us prove that $\textsf{W}$ is a witness set for $\sigma'_0$. We have to prove that for any play $\rho' \in \play_{\sigma'_0}$, there exists $\pi' \in \textsf{W}$ such that $\cost{\pi'} \leq \cost{\rho'}$. If $g \not\prefix \rho'$, then $\rho' \in \play_{\sigma_0}$ and there exists $\pi \in \wit_{\sigma_0}$ such that $\cost{\pi} \leq \cost{\rho'}$. By definition of $\textsf{W}$ and as 
$w(\rho_{[m,n]} = 0)$, there exists $\pi' \in \textsf{W}$ such that $\cost{\pi'} = \cost{\pi} \leq \cost{\rho'}$. If $g \prefix \rho'$, we consider $\rho'' \in \play_{\sigma_0}$ such that $\bar \rho'' = \rho'$. Again there exists $\pi \in \wit_{\sigma_0}$ such that $\cost{\pi} \leq \cost{\rho''} = \cost{\rho'}$, and there exists $\pi' \in \textsf{W}$ such that $\cost{\pi'} = \cost{\pi} \leq \cost{\rho'}$.
\item Third, by definition of $\textsf{W}$, we clearly have $\length{\textsf{W}} < \length{\wit_{\sigma_0}}$. 
\end{itemize} 

Finally, we prove that $\sigma'_0$ is a solution to the SPS problem. We have to show that each play $\pi' \in \play_{\sigma'_0}$ with a cost $c' \in C_{\sigma'_0}$ has a value $\val{\pi'} \leq B$. If $g \not\prefix \pi'$, then $\pi' \in \play_{\sigma_0}$. Notice that $c' = \cost{\pi'} \in C_{\sigma_0}$ because $\sigma'_0 \preceq \sigma_0$. Therefore, as $\sigma_0 \in \SPS{G}{B}$, we get that $\val{\pi'} \leq B$. If $g \prefix \pi'$, then either $\val{g} < \infty$, or $\val{g} = \infty$ in which case the weight of $\rho_{[m,n]}$ is null by hypothesis. 
\begin{itemize}
\item In the first case, $\val{g} \leq B$ because $g$ is prefix of the witness $\rho \in \wit_{\sigma_0}$ and $\sigma_0 \in \SPS{G}{B}$. This shows that $\val{\pi'} \leq B$. 
\item In the second case, we consider $\pi \in \play_{\sigma_0}$ such that $\bar \pi = \pi'$. Notice that $\cost{\pi} = c'$ because $\rho_{[m,n]}$ has a null weight and $g \sim h$ (in particular, $\cost{g} = \cost{h}$, that is, no new Player 1's target is visited from $g$ to $h$). We get that $c' = \cost{\pi} \in C_{\sigma_0}$ since $\sigma'_0 \preceq \sigma_0$. It follows that $\val{\pi} \leq B$ as $\sigma_0 \in \SPS{G}{B}$. As $g \sim h$ (in particular, $\val{g} = \val{h}$), as $\val{g} = \infty$, $\pi$ visits Player~$0$'s target after $h$, thus not along $\rho_{[m,n]}$. It follows that $\val{\pi'} \leq B$. Therefore $\sigma'_0 \in \SPS{G}{B}$.
\end{itemize}
This shows that $\sigma'_0$ is a solution to the SPS problem.
\end{proof}

\section{Bounding Pareto-Optimal Payoffs} 
\label{sec:bounding}

In this section, we show that if there exists a solution to the SPS problem, then there exists one whose Pareto-optimal costs are exponential in the size of the instance (see Theorem~\ref{thm:bounded-costs} below). It is a \emph{crucial step} to prove that the SPS problem is in \textsf{NEXPTIME}. This is the main contribution of this paper. Its proof is partly based on the lemmas of the previous section.

\begin{theorem} \label{thm:bounded-costs} 
Let $G \in \BinDimGame{t}$ be a binary SP game with dimension $t$, $B \in \nat$, and $\sigma_0 \in \mbox{SPS}(G,B)$ be a solution. Then there exists a solution $\sigma_0' \in \mbox{SPS}(G,B)$ without cycles such that $\sigma_0' \preceq \sigma_0$, and 
\begin{align}
&\forall c' \in C_{\sigma_0'}, \forall i \in \{1,\ldots,t\}:~~  c'_i \le 2^{\Theta(t^2)}\cdot|V|^{\Theta(t)}\cdot(B+3) ~~\vee~~ c'_i = \infty. \label{eq:O()} 
\end{align}
In case of any general SP game $G \in \DimGame{t}$, the same result holds with $|V|$ replaced by $|V| \cdot W$ in the inequality.
\end{theorem}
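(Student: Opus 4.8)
\medskip\noindent\textbf{Proof plan.}~ The plan is to prove the statement by induction on the dimension $t$, after reducing to a conveniently normalised solution. First I would replace $\sigma_0$ by a solution $\sigma_0' \le \sigma_0$ that is minimal for the preorder $\le$ and, among such, minimises the total length of its witnesses. Such a $\sigma_0'$ exists because $\le$ is well-founded on the corresponding antichains of costs over $\overline{\nat}^t$ (an infinite strictly decreasing chain would, through upward-closures, produce an infinite antichain of $\overline{\nat}^t$, contradicting Dickson's lemma), and the witness length is a natural number. Minimality forces $\sigma_0'$ to be without cycles — otherwise Lemma~\ref{lem:cycle} yields a solution that is smaller for $\le$ or of smaller witness length — and, via Lemma~\ref{lem:improve}, it forbids any solution $\tau_0\in\SPS{G_{|h}}{B}$ with $\tau_0<\sigma'_{0|h}$, for every $h\in\hist_{\sigma_0'}$. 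It then suffices to prove (\ref{eq:O()}) for this $\sigma_0'$.

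The core of the argument is to untangle two mutually dependent estimates: a bound on the length of each witness, and a bound on the number $|C_{\sigma_0'}|$ of Pareto-optimal costs. For the first, I would use the region decomposition of a witness $\rho$. As $\sigma_0'$ is without cycles, any region in which Player~$0$'s target is already visited contains no repeated vertex (Lemma~\ref{lem:cycle} applies, its third hypothesis being vacuous), so it has length at most $|V|$; any region of value $\infty$ contains no zero-weight cycle, hence splits into at most $w+1$ stretches of weight-$0$ edges of length at most $|V|$, where $w$ is its weight. Since all value-$\infty$ regions precede the first visit of $T_0$ and every witness has value at most $B$, the total weight of these regions along $\rho$ is at most $B$. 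Finally, a region boundary along $\rho$ occurs only at the first visit of one of the $t+1$ targets, or at one of the at most $|C_{\sigma_0'}|-1$ branching points on $\rho$. Summing, $\length{\rho} = O\big(|V|\cdot(B + t + |C_{\sigma_0'}|)\big)$, and since every finite component of $\cost{\rho}$ is attained within the first $\length{\rho}$ edges, each such component is $O\big(|V|\cdot(B + t + |C_{\sigma_0'}|)\big)$.

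It remains to bound $|C_{\sigma_0'}|$, which is where the induction on $t$ enters and, I expect, where the main difficulty lies. Consider the \emph{frontier} $\Gamma$ of shortest prefixes of witnesses that visit $T_0$ and at least one of $T_1,\dots,T_t$; it is an antichain of histories with $|C_{\sigma_0'}| \le \sum_{g\in\Gamma}|\wit_{\sigma_0'}(g)|$. For $g\in\Gamma$, the value is frozen below $B$ from $g$ onward, so every play consistent with $\sigma'_{0|g}$ has value at most $B$ and thus $\sigma'_{0|g}\in\SPS{G_{|g}}{B}$; and $G_{|g}$ has dimension at most $t-1$ since $g$ visits a target of Player~$1$. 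A Pareto-optimal play of $\sigma_0'$ through $g$ is Pareto-optimal in $G_{|g}$, so $|\wit_{\sigma_0'}(g)| \le |C_{\sigma'_{0|g}}|$; the induction hypothesis applied to $G_{|g}$, together with the non-improvability of $\sigma'_{0|g}$, bounds $|C_{\sigma'_{0|g}}|$ by a polynomial of degree $O(t)$ in the dimension-$(t-1)$ cost bound (via the antichain bound in $\overline{\nat}^{\,t-1}$). The delicate remaining point is to bound $|\Gamma|$: this requires controlling the portion of the witness tree above $\Gamma$ — where either $T_0$ is not yet visited, or $T_0$ is visited but no target of Player~$1$ is — which I would analyse by its branching points and regions in the spirit of the length bound, recursing on the two types of that portion and again invoking the induction hypothesis on the subgames where $T_1,\dots,T_t$ are first visited. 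Assembling the length bound with the bound on $|C_{\sigma_0'}|$ produces a recurrence relating the dimension-$t$ and dimension-$(t-1)$ bounds whose solution is $2^{\Theta(t^2)}\cdot|V|^{\Theta(t)}\cdot(B+3)$; the base case $t=1$ is immediate, since then $|C_{\sigma_0'}|=1$ and the unique witness has length $\Theta(|V|\cdot(B+3))$.

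The general case $G\in\DimGame{t}$ then follows from the binary case via Lemma~\ref{lem:binary}: the binary game $G'$ has at most $|V|\cdot W$ vertices, and the canonical bijections between plays and between strategies preserve costs, values and the existence of eliminable cycles, so the binary bound transfers to $G$ with $|V|$ replaced by $|V|\cdot W$. Besides bounding $|\Gamma|$, the main technical care is to carry out the normalisation steps — passing to a without-cycles minimal solution and replacing restrictions via Lemma~\ref{lem:improve} — in a well-founded order, so that the induction is legitimate.
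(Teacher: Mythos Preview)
Your overall architecture --- induct on the dimension $t$, normalise to a minimal solution without cycles, bound each finite cost component by the witness length, and bound the witness length in terms of $|C_{\sigma_0'}|$ --- is reasonable and overlaps with what the paper does. But the proof has a real gap at exactly the point you flag as ``delicate'': bounding $|\Gamma|$. Your scheme gives
\[
|C_{\sigma_0'}| \ \le\ \sum_{g\in\Gamma}|\wit_{\sigma_0'}(g)| \ \le\ |\Gamma|\cdot\max_{g\in\Gamma}|C_{\sigma'_{0|g}}|,
\]
and induction controls the second factor. But $|\Gamma|$ is just the number of branches of the witness tree at the cut where $T_0$ and some $T_i$ have both been visited, and nothing you have written bounds the branching above that cut by anything smaller than $|C_{\sigma_0'}|$ itself. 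The portion of the tree where $T_0$ is already visited but no $T_i$ is still has full dimension $t$, so you cannot recurse there; and the depth bound you derive for that portion (each region of length at most $|V|$, number of regions along a branch bounded by the number of branching points) only tells you that each branch is short, not that there are few branches. A tree of depth $D$ can have $2^D$ leaves, and your $D$ is $\Theta(|V|\cdot|C_{\sigma_0'}|)$ in the worst case, so this route is circular.

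The paper breaks the circularity by a different decomposition. Instead of bounding $|C_{\sigma_0'}|$ via the tree, it first uses induction to bound the \emph{spread} of each Pareto-optimal cost: there is a solution in which every finite component satisfies $c_i \le \max\{c_{min},B\}+1+\f{0}{t-1}$ (Lemma~\ref{lem:induction}). Writing $c = c_{min}\cdot(1,\dots,1)+d$ with each $d_i \in \{0,\dots,B+1+\f{0}{t-1}\}\cup\{\infty\}$, two Pareto-optimal costs with the same $d$ are comparable, hence equal; so $|C_{\sigma_0'}|$ is bounded by the number of offset vectors $d$, namely $(\f{0}{t-1}+B+3)^{t}$ (Lemma~\ref{lem:boundC}). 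This bounds $|C_{\sigma_0'}|$ \emph{without} looking at the witness tree at all. Only then is $c_{min}$ bounded, and this too requires a separate idea you do not have: a pigeonhole argument that extracts $2^{t}+1$ Pareto-optimal costs with strictly nested finite components and finds two with the same $\infty$-pattern, contradicting incomparability (Lemma~\ref{lem:bounded}). Your route from $\length{\rho}$ to $c_{min}$ again passes through $|C_{\sigma_0'}|$, so it does not give an independent bound. In short, the missing ingredient is the antichain/offset argument of Lemma~\ref{lem:boundC}; without it the two estimates you set up remain mutually dependent and the recursion does not close.
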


In view of this result, a solution to the SPS problem is said to be \emph{Pareto-bounded}  when its Pareto-optimal costs are bounded as stated in the theorem. 

The theorem is proved by induction on the dimension $t$, with the calculation of a \emph{function $\f{B}{t}$ depending on both $B$ and~$t$}, that bounds the components $c'_i \neq \infty$. That is, in Theorem~\ref{thm:bounded-costs}, Equation~(\ref{eq:O()}) is replaced by 
\begin{align}
&\forall c' \in C_{\sigma_0'}, \forall i \in \{1,\ldots,t\}:~~  c'_i \le \f{B}{t} ~~\vee~~ c'_i = \infty. \label{eq:f} 
\end{align}
This function is defined by induction on $t$ through the proofs, and afterwards made explicit and upper bounded by the bound given in Equation (\ref{eq:O()}):
\begin{eqnarray}
 \label{eq:fBounded}
 \f{B}{t} \leq 2^{\Theta(t^2)}\cdot|V|^{\Theta(t)}\cdot(B+3).
\end{eqnarray}

The proof of Theorem~\ref{thm:bounded-costs} is detailed in the next four subsections for binary SP games; it is then easily adapted to any SP games by Lemma~\ref{lem:binary}.

\subsection{Dimension One}

We begin the proof of Theorem~\ref{thm:bounded-costs} with the case $t=1$. In this case, the order on costs is total. 

\begin{lemma} \label{lem:basis}
Let $G \in \BinDimGame{1}$ be a binary SP game with dimension $1$, $B \in \nat$, and $\sigma_0 \in \mbox{SPS}(G,B)$ be a solution. Then there exists a solution $\sigma_0' \in \mbox{SPS}(G,B)$ without cycles such that $\sigma_0' \preceq \sigma_0$ and 
\begin{eqnarray}
\forall c' \in C_{\sigma_0'}: ~~c' \le \f{B}{1} = B + |V| ~~\vee~~ c' = \infty.
\end{eqnarray}
\end{lemma}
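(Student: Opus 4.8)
The plan is to exploit that, when $t=1$, the order on costs is total, so $\{\cost{\rho} \mid \rho \in \play_{\sigma_0}\}\subseteq\overline{\nat}$ has a unique minimum: $C_{\sigma_0}=\{c\}$ is a singleton and there is a \emph{single} witness $\rho\in\wit_{\sigma_0}$. This is what makes the argument light, because for every prefix $h$ of that witness we have $\wit_{\sigma_0}(h)=\{\rho\}$, so the $\wit$-component of the equivalence relation $\sim$ is automatically satisfied; comparing two prefixes of $\rho$ via $\sim$ then reduces to comparing their values and their costs.

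I would first obtain a suitable $\sigma_0'$ by a minimality argument rather than by iterating Lemma~\ref{lem:cycle}, which avoids a separate termination proof. Consider the nonempty set of solutions $\tau\in\SPS{G}{B}$ with $\tau\le\sigma_0$ (recall $\le$ is transitive). To each such $\tau$ associate the pair $(c_\tau,\length{\rho_\tau})\in\overline{\nat}\times\nat$, where $C_\tau=\{c_\tau\}$ and $\rho_\tau$ is its unique witness, and let $\sigma_0'$ minimize this pair lexicographically; the minimum exists since $\nat$ is well-ordered and $\infty$ is the top of $\overline{\nat}$. I claim $\sigma_0'$ is without cycles. Otherwise Lemma~\ref{lem:cycle} applied to the witness $\rho:=\rho_{\sigma_0'}$ yields a solution $\sigma_0''\le\sigma_0'\le\sigma_0$ obtained by deleting a cycle $\rho_{[m,n]}$; the play $\bar\rho=\rho_{\le m}\rho_{\ge n+1}$ is then consistent with $\sigma_0''$ and $\cost{\bar\rho}=c_{\sigma_0'}-w(\rho_{[m,n]})$. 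If $c_{\sigma_0'}<\infty$ and $w(\rho_{[m,n]})>0$, then $c_{\sigma_0''}<c_{\sigma_0'}$, contradicting minimality of $c_{\sigma_0'}$. Otherwise $w(\rho_{[m,n]})=0$ (forced by the third hypothesis of Lemma~\ref{lem:cycle} when $\val{\rho_{\le m}}=\infty$, and by the previous case otherwise), so $\cost{\bar\rho}=c_{\sigma_0'}$ and $\val{\bar\rho}=\val{\rho}$; hence $c_{\sigma_0''}=c_{\sigma_0'}$ and $\bar\rho$ is a legitimate witness of $\sigma_0''$. Since $\rho_{\le m}\sim\rho_{\le n}$ forbids any target being first visited inside $\rho_{[m+1,n]}$, the first visits to $T_0$ and $T_1$ in $\rho$ are either before position $m$ (untouched in $\bar\rho$) or after position $n$ (shifted earlier by $n-m\ge 1$), and as $n<\length{\rho}$ the position realizing $\length{\rho}$ is shifted, giving $\length{\bar\rho}<\length{\rho}$, a contradiction with minimality of $\length{\rho_{\sigma_0'}}$.

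It then remains to bound $c':=c_{\sigma_0'}$. If $c'=\infty$ the claimed disjunction holds, so assume $c'<\infty$; the witness $\rho$ visits $T_0$ and $T_1$, say first at positions $k_0$ and $k_1$, with $\val{\rho}=w(\rho_{\le k_0})\le B$ (the value of a witness of a solution is at most $B$) and $c'=w(\rho_{\le k_1})$. If $k_1\le k_0$, then $c'\le w(\rho_{\le k_0})\le B$. If $k_1>k_0$, then since the arena is binary $c'=\val{\rho}+w(\rho_{[k_0,k_1]})\le B+(k_1-k_0)$, and it suffices to show $k_1-k_0\le|V|$. If this fails, two positions $p<q$ in $[k_0,k_1]$ carry the same vertex; necessarily $q<k_1$ (else $T_1$ is visited at $p<k_1$), so $q<\length{\rho}=k_1$; moreover $\rho_{\le p}$ and $\rho_{\le q}$ both have value $\val{\rho}$ and cost $\infty$ (they lie between the first visits of $T_0$ and $T_1$) and both have witness set $\{\rho\}$, so $\rho_{\le p}\sim\rho_{\le q}$, and $\val{\rho_{\le p}}<\infty$ makes the third hypothesis of Lemma~\ref{lem:cycle} vacuous. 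Thus $\rho_{[p,q]}$ is an eliminable cycle, contradicting that $\sigma_0'$ is without cycles; hence $k_1-k_0\le|V|$ and $c'\le B+|V|$.

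The part I expect to be most delicate is the ``without cycles'' claim: one must check, across the cases $c_{\sigma_0'}<\infty$ versus $c_{\sigma_0'}=\infty$ and $\val{\rho_{\le m}}$ finite versus infinite, both that deleting the cycle never increases the cost and that the witness length strictly decreases when the cost is unchanged, which relies precisely on the null weight of the deleted cycle together with the region condition guaranteeing that the first visits to $T_0$ and $T_1$ are not located inside the deleted factor. The cost bound itself, in dimension one, is then a routine pigeonhole argument, made easy by the collapse of $\sim$ coming from the uniqueness of the witness.
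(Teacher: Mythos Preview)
Your argument is essentially the paper's: both exploit that $t=1$ makes $C_{\sigma_0}$ a singleton with a unique witness, locate a cycle between the first $T_0$-visit and the first $T_1$-visit via pigeonhole, and eliminate it through Lemma~\ref{lem:cycle}; you simply replace the paper's explicit iteration on $\length{\rho}$ by an equivalent lexicographic-minimality argument (which is a clean way to package termination, though strictly speaking you should minimize over pairs $(\tau,\rho_\tau)$ since the witness is a choice). One small point in your ``without cycles'' claim: the justification that $w(\rho_{[m,n]})=0$ ``by the previous case'' when $\val{\rho_{\le m}}<\infty$ silently assumes $c_{\sigma_0'}<\infty$ in that sub-case; this does hold---if $c_{\sigma_0'}=\infty$ then $\length{\rho}$ is the first $T_0$-position, so $m<n<\length{\rho}$ forces $\val{\rho_{\le m}}=\infty$---but it deserves a sentence.
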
 

Notice that $\f{B}{1}$ satisfies (\ref{eq:fBounded}) when $t=1$. Indeed, we have 
\begin{eqnarray}\label{eq:1}
\f{B}{1} \leq |V|\cdot(B+3) \leq 2^{\Theta(t^2)}\cdot|V|^{\Theta(t)}\cdot(B+3).
\end{eqnarray}

Lemma~\ref{lem:basis} is proved by showing that if the unique Pareto-optimal cost of $\sigma_0$
is finite but greater than $B + |V|$, then we can eliminate a cycle according to Lemma~\ref{lem:cycle} and get a better solution.

\begin{proof}[Proof of Lemma~\ref{lem:basis}] Let $\sigma_0 \in \mbox{SPS}(G,B)$ be a solution. 
Player~$1$ has only one target, thus $C_{\sigma_0}$ is a singleton, say $C_{\sigma_0} = \{c\}$. If $c \le B + |V|$ or $c = \infty$, Lemma~\ref{lem:basis} trivially holds (we eliminate cycles if necessary). Therefore, let us suppose that $B + |V| < c < \infty$. Let $\rho \in \wit_{\sigma_0}$ be such that $\cost{\rho} = c$. Let $h$ be the history of maximal length such that $h \prefix \rho$ and $w(h) = B$. Notice that $h$ exists as the arena is binary and $B + |V| < c < \infty$. Since $\sigma_0 \in \mbox{SPS}(G,B)$ and $\rho \in \wit_{\sigma_0}$, Player~0's target is visited by $h$ and Player~$1$'s target is visited by $\rho$ at least $|V|+1$ vertices after $h$. Hence, $\rho$ performs a cycle between the two visits, that satisfies the hypotheses of Lemma~\ref{lem:cycle}. We can thus eliminate this cycle and create a better solution. We repeat this process until $c \le B + |V|$ and the termination is guaranteed by the strict reduction of $\length{\wit_{\sigma_0}}$ (by Lemma~\ref{lem:cycle}). In this way, Lemma~\ref{lem:basis} is established.
\end{proof}

\subsection{Bounding the Pareto-Optimal Costs by Induction}

We now proceed to the case of dimension $t + 1$, with $t \geq 1$. The next lemma is proved by using the \emph{induction hypothesis} (that is, we suppose that Equation~(\ref{eq:f}) holds for dimension $t$). Recall that $c_{min}$ is the minimum component of the cost~$c$.

\begin{lemma} \label{lem:induction}
Let $G \in \BinDimGame{t+1}$ be a binary SP game with dimension $t+1$, $B \in \nat$, and $\sigma_0 \in \mbox{SPS}(G,B)$ be a solution. Then there exists a solution $\sigma_0' \in \mbox{SPS}(G,B)$ without cycles such that $\sigma_0' \preceq \sigma_0$, and 
\begin{eqnarray} \label{eq:induction}
\forall c' \in C_{\sigma_0'}, \forall i \in \{1,\ldots,t+1\}:  ~~c'_i \le \max\{c'_{min},B\} + 1 + \f{0}{t} ~~\vee~~ c'_i = \infty.
\end{eqnarray}
\end{lemma}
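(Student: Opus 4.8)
The plan is to fix a solution $\sigma_0 \in \mbox{SPS}(G,B)$, apply Lemma~\ref{lem:basis} machinery conceptually but now in dimension $t+1$, and reduce each ``large'' component $c'_i$ of a Pareto-optimal cost to the claimed bound by exhibiting an eliminable cycle along a witness. First I would take, for each $c \in C_{\sigma_0}$, a witness $\rho \in \wit_{\sigma_0}$ with $\cost{\rho} = c$, and look at a component $c_i$ that is finite but strictly larger than $\max\{c_{min},B\} + 1 + \f{0}{t}$. Let $j$ be an index realising $c_{min}$ (so $T_j$ is visited ``early'' along $\rho$, and $T_0$ is visited with value $\le B$ since $\sigma_0$ is a solution). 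Consider the prefix of $\rho$ up to the moment the weight first exceeds $\max\{c_{min},B\}$; after that moment, along $\rho$, no new target among those still ``pending'' can already have been visited with small cost, and in particular $T_i$ is visited more than $\f{0}{t}$ weight-units (hence, the arena being binary, more than $\f{0}{t}$ steps, but I will need a bound in terms of $|V|$-many distinct configurations) later. The idea is that on the stretch between ``weight $\max\{c_{min},B\}+1$'' and the first visit of $T_i$, the play $\rho$ stays inside one region and, by the induction hypothesis applied to a suitable subgame, cannot need to wander for more than $\f{0}{t}$ weight before hitting $T_i$; if it does, there is a cycle in the same region before the last target visit, with null weight in the case $\val{} = \infty$ is relevant, so Lemma~\ref{lem:cycle} eliminates it and strictly decreases $\length{\rho}$.

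The key step, and the one I expect to carry the real content, is the invocation of the induction hypothesis. At the point along $\rho$ where the weight first exceeds $\max\{c_{min},B\}$, consider the history $h$ reaching that point; in the subgame $G_{|h}$, at least one target of Player~$1$ (namely $T_j$, or whichever realises the minimum, and also $T_0$) has already been visited, so $G_{|h}$ has dimension at most $t$. Moreover $\sigma_{0|h}$ is a solution in $\SPS{G_{|h}}{B}$, and in fact — because $h$ already carries weight $> B$ — Player~$0$'s target is already visited, so effectively the relevant bound in the subgame is governed by $\f{0}{t}$ rather than $\f{B}{t}$ (this is exactly why the statement has $\f{0}{t}$ and not $\f{B}{t}$). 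By the induction hypothesis there is a better solution $\tau_0 \le \sigma_{0|h}$ in $G_{|h}$, without cycles, whose Pareto-optimal costs in the subgame are $\le \f{0}{t}$ on each finite component. Using Lemma~\ref{lem:improve} to splice $\tau_0$ in at $h$, i.e. replacing $\sigma_0$ by $\sigma_0[h \to \tau_0]$, yields a better solution in $\SPS{G}{B}$ in which, along the relevant branch, the extra weight accumulated after $h$ before reaching $T_i$ is at most $\f{0}{t}$. Combined with the weight $\le \max\{c_{min},B\}+1$ already accumulated at $h$, this gives $c'_i \le \max\{c'_{min},B\} + 1 + \f{0}{t}$.

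The main obstacle I anticipate is bookkeeping: the splice at $h$ may change $C_{\sigma_0}$, so I have to argue about each Pareto-optimal cost of the new strategy rather than of the old one, and iterate the argument (over the at most $t+1$ components, and over the various branches of the witness tree) while guaranteeing termination — termination coming from the strict decrease of $\length{\rho}$ under Lemma~\ref{lem:cycle} and from $\sigma_0' < \sigma_0$ under Lemma~\ref{lem:improve}, with the preorder $\le$ on strategies being well-founded on antichains of costs bounded below. A secondary subtlety is the $\val{} = \infty$ case in Lemma~\ref{lem:cycle}: when the branch under consideration never visits $T_0$, I must ensure the eliminated cycle has null weight, which forces me to choose the cycle inside the region \emph{after} any positive-weight edge has been ``used up'' — but in that case $\val{} = \infty$ means no constraint from $B$, and a pigeonhole on the $\le |V|$ vertices inside a zero-weight region segment supplies a null-weight cycle directly. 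I would handle this case separately, essentially reverting to the dimension-one style argument of Lemma~\ref{lem:basis}, and treat the finite-value case via the induction-hypothesis splice described above.
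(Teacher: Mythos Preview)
Your second paragraph captures the paper's approach exactly: fix a prefix $h$ of the witness $\rho$ with $w(h) = \max\{c_{min}, B\} + 1$, observe that by this point both $T_0$ and some Player~$1$ target are visited so the remaining game has dimension $\le t$, apply the induction hypothesis there to obtain a strategy whose residual finite costs are $\le \f{0}{t}$, splice via Lemma~\ref{lem:improve}, and iterate with termination guaranteed by the strict improvement $\sigma'_0 < \sigma_0$. The one technical step you gloss over, and which the paper carries out explicitly, is the construction of an auxiliary game $\bar G$ with initial vertex $\Last{h}$, the already-visited Player~$1$ targets deleted, and \emph{no} target for Player~$0$; this is what lets Theorem~\ref{thm:bounded-costs} (stated for games, not subgames) be invoked, and what makes the bound come out as $\f{0}{t}$ rather than $\f{B}{t}$, since $\bar\sigma_0 \in \SPS{\bar G}{0}$ trivially. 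You clearly have this in mind, but it should be written out.

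Your first paragraph's direct cycle-elimination line does not stand on its own: after the prefix $h$, the witness $\rho$ need not stay in a single region (it may visit further targets or cross branching points before reaching $T_i$), and $\f{0}{t}$ is not a pigeonhole bound on vertices, so nothing forces an eliminable cycle in the sense of Lemma~\ref{lem:cycle}. The actual mechanism that shortens the tail is the inductive replacement by $\tau_0$, not a local cycle deletion. Likewise, the $\val{} = \infty$ worry in your third paragraph never arises here: since $\rho$ is a witness and $\sigma_0$ is a solution, $\val{\rho} \le B$, and since $w(h) > B$ the history $h$ already carries the visit to $T_0$; hence Lemma~\ref{lem:cycle}'s null-weight side condition is irrelevant at this stage (it only matters at the very end when you clean up remaining cycles, and there it is handled by the ``without cycles'' clause, not by a separate case analysis). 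You can simply drop those two side discussions.
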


The idea of the proof is as follows. If there exists $c \in C_{\sigma_0}$ such that for some $i \in \{1,\ldots,t+1\}$, $c_i$ does not satisfy (\ref{eq:induction}), then we consider a witness $\rho$ with $\cost{\rho} = c$ and the history $h$ of minimal length such that $h \prefix \rho$ and $w(h) = \max\{c_{min},B\} + 1$. It follows that the subgame $G_{|h}$ has a smaller dimension and we can thus apply the induction hypothesis in $G_{|h}$ with $B = 0$ as $h$ has already visited Player~$0$'s target. Hence, by (\ref{eq:f}), we get a better solution in the subgame $G_{|h}$, and then a better solution in the whole game $G$ by Lemma~\ref{lem:improve}. 

Let us proceed to the formal proof.

\begin{proof}[Proof of Lemma~\ref{lem:induction}]
In this proof, we assume that Equation~(\ref{eq:f}) holds for dimension $t$, by induction hypothesis. Let $\sigma_0 \in \mbox{SPS}(G,B)$ be a solution. If Equation~(\ref{eq:induction}) is satisfied, then Lemma~\ref{lem:induction} holds (we eliminate cycles if necessary).
Therefore, let us suppose that there exists $c \in C_{\sigma_0}$ such that for some $i \in \{1,\ldots,t+1\}$: 
\begin{eqnarray} \label{eq:ci}
\max\{c_{min},B\} + 1 + \f{0}{t}< c_i < \infty.
\end{eqnarray} 
Let $\rho \in \wit_{\sigma_0}$ be a witness with $\cost{\rho} = c$. We define the history $h$ of minimal length such that
\begin{eqnarray} \label{eq:h}
h \prefix \rho \mbox{ and } w(h) = \max\{c_{min},B\} + 1.
\end{eqnarray}
Notice that $h$ exists by definition of $c_i$ and as the arena is binary. As $\sigma_0 \in \mbox{SPS}(G,B)$, by definition of $h$, Player~0's target and at least one target of Player~$1$ are visited by $\bar h$ such that $h = \bar{h}\Last{h}$. Therefore the subgame $G_{|h}$ has dimension $k \leq t$. Notice that $k>0$ by definition of $c_i$, see (\ref{eq:ci}).

Let us consider the SP game $\bar G$ 
\begin{itemize}
\item with the same arena as $G$ but with the initial vertex $\Last{h}$ (instead of $v_0$),
\item with Player~$0$'s target equal to $\{\Last{h}\}$ (since $\bar h$ has visited the target of Player~$0$ in $G$, we replace it by the initial vertex of $\bar G$),\footnote{We recall that a target must be non-empty by definition.}
\item and with the targets visited by $\bar h$ removed from Player~$1$'s set of targets, and if $k < t$ with $t-k$ additional targets that are copies of some remaining (non-empty) targets in a way to have exactly $t$ targets for Player~$1$.
\end{itemize} 
This game $\bar G$ has dimension $t$. We also consider the strategy $\bar{\sigma}_0$ for Player~$0$ in $\bar G$ constructed from the strategy $\sigma_{0|h}$ in $G_{|h}$ as follows: $\bar{\sigma}_0(g) = \sigma_{0|h}(\bar h g)$ for all histories $g \in \hist_{\bar G}$ (Player~$0$ plays in $\bar G$ from $\Last{h}$ as he plays in $G_{|h}$ from $h$). We have that $\bar{\sigma}_0 \in \SPS{\bar{G}}{0}$ (the bound is equal to $0$ as Player~$0$'s target is equal to $\Last{h}$). 

We can thus apply the induction hypothesis: by Equation~(\ref{eq:f}), there exists $\bar{\tau}_0 \in \SPS{\bar{G}}{0}$ without cycles such that $\bar{\tau}_0 \preceq \bar{\sigma}_0$ and for all $\bar c \in C_{\bar{\tau}_0}$ and all $j \in \{1,\ldots,k\}, \bar{c}_j \le \f{0}{t}$ or $\bar{c}_j = \infty$. 

Notice that one can choose for $\bar{\sigma}_0$ a set of witnesses derived from those of $\sigma_0$ having $h$ as prefix: $\wit_{\bar{\sigma}_0} = \{\bar{\pi} \in \play_{\bar G} \mid \bar{h}\bar{\pi} \in \wit_{\sigma_0}\}$. In particular, there exists $\bar \rho \in \wit_{\bar{\sigma}_0}$ such that $\rho = \bar{h}\bar{\rho}$. By (\ref{eq:ci}) and (\ref{eq:h}), we have that $\costT{T_i}{\bar{\rho}} > \f{0}{t}$ ($\costT{T_i}{\bar{\rho}}$ and $\costT{T_i}{\rho}$ differ by $w(h)$). As $\bar{c}_i \leq \f{0}{t}$, it follows that $\bar{\tau}_0 \prec \bar{\sigma}_0$. 

We now want to transfer the previous solution $\bar{\tau}_0$ in $\bar G$ to the subgame $G_{|h}$ in a way to apply Lemma~\ref{lem:improve} and thus obtain the desired strategy $\sigma'_0$ in $G$. Recall that $\bar{\sigma}_0$ was constructed from  $\sigma_{0|h} \in \Sigma_{0|h}$. Let us conversely define the strategy $\tau_0 \in \Sigma_{0|h}$ from $\bar{\tau}_0$: $\tau_0(\bar h g) = \bar{\tau}_0(g)$ for all histories $\bar h g \in \hist_{G}$. Moreover, we can choose the following sets of witnesses for $\sigma_{0|h}$ and $\tau_0$: $\wit_{\sigma_{0|h}} = \{\bar{h}\bar{\pi} \mid \bar{\pi} \in \wit_{\bar{\sigma}_0}\}$ and $\wit_{\tau_0} = \{\bar{h}\bar{\pi} \mid \bar{\pi} \in \wit_{\bar{\tau}_0}\}$. From $\bar{\tau}_0  \prec \bar{\sigma}_0$, it follows  that $\tau_0 \prec \sigma_{0|h}$ (again, the Pareto-optimal costs of $\bar{\tau}$ and $\tau$ differ by $w(h)$, and so do the ones of $\bar{\sigma}_0$ and $\sigma_0$). 
Moreover $\tau_0 \in \SPS{G_{|h}}{B}$ since $h$ visits Player~$0$'s target. Hence, by Lemma~\ref{lem:improve}, the strategy $\sigma'_0 = \sigma_0[h \rightarrow \tau_0]$ is a solution in $\SPS{G}{B}$ and $\sigma_0' \prec \sigma_0$.

We repeat the process described above as long as there remain costs $c \in C_{\sigma_0}$ that are too large. The process terminates as we are always building strictly better strategies. If the resulting strategy is not without cycles, we can eliminate them, one by one, to get a better strategy by applying Lemma~\ref{lem:cycle}. This second process also terminates by the strict reduction of the length of the witness set (by Lemma~\ref{lem:cycle}).
\end{proof}

\subsection{Bounding the Minimum Component of Pareto-Optimal Costs}


To prove Theorem~\ref{thm:bounded-costs}, in view of Lemma~\ref{lem:induction}, our last step is to provide a bound on $c_{min}$, the minimum component of each Pareto-optimal cost $c \in C_{\sigma_0}$. Notice that if $c_{min} = \infty$, then all the components of $c$ are equal to $\infty$. In this case, $C_{\sigma_0} = \{(\infty, \ldots, \infty) \}$, i.e., there is no play in $\play_{\sigma_0}$ visiting Player~$1$'s targets. Lemma~\ref{lem:bounded} provides a bound on $c_{min}$  when $C_{\sigma_0} \neq \{(\infty, \ldots, \infty) \}$.

\begin{lemma} \label{lem:bounded}
Let $G \in \BinDimGame{t+1}$ be a binary SP game with dimension $t+1$, $B \in \nat$, and $\sigma_0 \in \mbox{SPS}(G,B)$ be a solution without cycles and satisfying (\ref{eq:induction}). Suppose that $C_{\sigma_0} \neq \{(\infty, \ldots, \infty) \}$. Then, 
\begin{eqnarray}
    \forall c \in C_{\sigma_0}: ~c_{min} &\le& B + 2^{t+1}\big(|V| \cdot (\log_2(|C_{\sigma_0}|)+1) + 1 + \f{0}{t}\big),
    \label{eq:cmin}\\
    \mbox{ with } \quad 
    |C_{\sigma_0}| &\le& \big( \f{0}{t}+B+3 \big)^{t+1}. \label{eq:Csigma0}
\end{eqnarray}
\end{lemma}




The proof of Lemma~\ref{lem:bounded} requires the next technical property about trees. We recall the notion of \emph{depth} of a node in a tree: the root has depth $0$, and if a node has depth $d$, then its sons have depth $d+1$.

\begin{lemma} \label{lem:binarytree}
    Let $n, \ell \in \nat \setminus \{0\}$. Consider a finite tree with at most $n$ leaves such that there are at most $\ell - 1$ consecutive nodes with degree one along any branch of the tree. Then the leaves with the smallest depth have a depth bounded by $\ell \cdot (\log_2(n)+1)$.
\end{lemma}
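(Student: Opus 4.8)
The plan is to reduce to the case of a tree in which every internal node has at least two children, where a clean $\log_2$ bound on the shallowest leaf is available, and then to pay a factor $\ell$ when translating depths back to the original tree. Throughout, ``degree one'' means ``exactly one child'', and a ``branch'' is a root-to-leaf path.

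\emph{Step 1 (contraction).} From the given tree $T$, build a tree $T'$ by suppressing every node of degree one: whenever a node $v$ has a single child $c$, delete $v$ and join its parent directly to $c$; if $v$ is the root, $c$ becomes the new root. A degree-one node is never a leaf, so $T'$ has exactly the same set of leaves as $T$, hence at most $n$ of them, and every internal node of $T'$ has at least two children.

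\emph{Step 2 (binary-tree bound).} The conceptual core is the following sub-claim: in a finite rooted tree in which every internal node has at least two children and which has at most $m$ leaves, the shallowest leaf has depth at most $\log_2 m$. To see this, let $d$ be the minimum leaf depth; then every node at depth $<d$ is internal, so the number of nodes at depth $k+1$ is at least twice the number at depth $k$ for each $k<d$, whence there are at least $2^d$ nodes at depth $d$. The subtrees hanging below distinct depth-$d$ nodes are disjoint and each contains at least one leaf of the whole tree, so $m \ge 2^d$, i.e. $d \le \log_2 m$. Applying this to $T'$ yields a leaf $x$ at depth $d' \le \log_2 n$ in $T'$.

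\emph{Step 3 (lifting depths).} Each edge of $T'$ corresponds, in $T$, to a maximal chain of degree-one nodes followed by its (non-degree-one) endpoint; by hypothesis such a chain has at most $\ell-1$ nodes, so that edge accounts for a depth increase of at most $\ell$ in $T$. Summing along the $d'$ edges of the root-to-$x$ path in $T'$, and absorbing the degree-one chain sitting immediately below the root (also of length at most $\ell-1$), the depth of $x$ in $T$ is at most $\ell\cdot d' \le \ell \log_2 n$. Since $x$ is a leaf of $T$, its shallowest leaves have depth at most $\ell\log_2 n$, as required.

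\emph{Main obstacle.} All the mathematical content is in the doubling argument of Step~2; Steps~1 and~3 are bookkeeping. The only point requiring care is making the correspondence of Step~3 precise --- in particular the treatment of the chain of degree-one nodes just below the root, which is suppressed together with the root and must be charged against the factor $\ell$ rather than against an edge of $T'$; the small-$n$ cases ($T$ essentially a single chain) are checked directly.
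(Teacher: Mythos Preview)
The paper does not actually prove this lemma --- it is stated as ``easily established'' and closed with a bare \qed\ --- so there is nothing to compare your argument against. Your contraction-then-doubling approach is the natural one and is correct in outline.

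There is, however, a real gap in Step~3. You correctly note that each of the $d'$ edges of $T'$ contributes at most $\ell$ to the depth in $T$, and that the chain from the root of $T$ down to the root of $T'$ contributes an additional $\le \ell-1$. But these two contributions do not combine to at most $\ell d'$: your argument genuinely gives only $(\ell-1) + \ell d' \le (\ell-1) + \ell\log_2 n$, and the extra $\ell-1$ cannot be ``absorbed'' as you claim. Concretely, with $\ell=3$ and $n=2$, take the tree with root $r$, then $r\to a\to b$ (both $r,a$ of out-degree one), where $b$ branches into two chains $b\to b_1\to b_1'\to c$ and $b\to b_2\to b_2'\to d$; no branch ever has more than $\ell-1=2$ consecutive degree-one nodes, yet both leaves sit at depth $5 > 3 = \ell\log_2 n$. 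Already $n=1$ breaks the stated bound: a chain of length $\ell-1$ has its unique leaf at depth $\ell-1$, while $\ell\log_2 1 = 0$ --- so your ``checked directly'' remark for small $n$ does not go through either. In other words, the bound in the lemma is slightly too strong as written; what your proof actually establishes is a bound of the form $\ell(\log_2 n + 1)$. This is entirely adequate for the paper's purposes, since the lemma feeds only into asymptotic $\Theta(\cdot)$ estimates, but Step~3 and the small-$n$ discussion should be corrected to state the bound you really prove.
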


\begin{proof}
We are going to proceed by induction on $n$. If $n = 1$, then the tree only contains nodes of degree at most 1. Thus, the depth of the tree is at most $l-2 \leq l \cdot (\log_2(n) + 1)$.

Let $n \geq 2$ and suppose that the lemma holds for any $k < n$. Consider $\mathcal{T}$ a finite tree with at most $n$ leaves such that there are at most $l-1$ consecutive nodes with degree one along any branch of the tree. Let $X$ be the set of nodes of $\mathcal{T}$ whose degree is greater than or equal to 2. Since $n \ge 2$, $X$ is not empty. Let $x$ be a node in $X$ with the smallest depth. By definition, $x$ is connected to the root of $\mathcal{T}$ by a succession of nodes of degree 1. Thus, its depth is at most $l-1$. Let $d \ge 2$ be the degree of $x$ and $x_1,\ldots,x_d$ be its children. Since $d \ge 2$, there exists $i$ such that the subtree $\mathcal{T}_i$ rooted in $x_i$ has at most $\lfloor n/2 \rfloor$ leaves. Hence, by induction hypothesis, the depth of the leaves of $\mathcal{T}_i$ of smallest depth is bounded by $l \cdot(\log_2(\lfloor n/2 \rfloor) +1) \leq l \cdot \log_2(n)$. Since the depth of $x_i$ in $\mathcal T$ is $l$, the depth of the leaves of $\mathcal{T}$ of smallest depth is bounded by $l + l \cdot \log_2(n) = l\cdot(\log_2(n) + 1)$.  
\end{proof}

We are now able to establish Lemma~\ref{lem:bounded}.

\begin{proof}[Proof of Lemma~\ref{lem:bounded}]
Let $\sigma_0 \in \mbox{SPS}(G,B)$ be a solution without cycles and satisfying (\ref{eq:induction}). Suppose that $C_{\sigma_0} \neq \{(\infty, \ldots, \infty) \}$ and let $\wit_{\sigma_0}$ be a set of witnesses for $C_{\sigma_0}$. Assume by contradiction that Inequality (\ref{eq:cmin}) does not hold, that is, there exists $d \in C_{\sigma_0}$ such that 
\begin{eqnarray} \label{eq:d}
    B + 2^{t+1}\big(\delta + 1 + \f{0}{t}\big) < d_{min} < \infty. 
\end{eqnarray}
with $\delta = |V| \cdot (\log_2(|C_{\sigma_0}|)+1)$. Let $\rho \in \wit_{\sigma_0}$ be a witness such that $\cost{\rho} = d$. We are going to build a finite sequence of Pareto-optimal costs $(c^{(k)})_{k \in \{0,\ldots,2^{t+1}\}}$ such that, for all $k \in \{0,\ldots,2^{t+1}-1\}$,
\begin{eqnarray} \label{eq:construction}
\max \{ c^{(k)}_i \mid c^{(k)}_i \ne \infty \} < c^{(k+1)}_{min}.    
\end{eqnarray}
Given the size of this sequence, by the pigeonhole principle, there must exist $k,k' \in \{0,\ldots,2^{t+1}\}$ such that $k < k'$ and 
$$\{i \in \{1,\ldots,t+1\} \mid c_i^{(k)} = \infty\} = \{i \in \{1,\ldots,t+1\} \mid c_i^{(k')} = \infty\}$$ 
(a cost component is either finite or infinite). It follows from (\ref{eq:construction}) that $c^{(k)} < c^{(k')}$. This is impossible as $C_{\sigma_0}$ is an antichain.

To build the sequence $(c^{(k)})_{k \in \{0,\ldots,2^{t+1}\}}$, we consider the tree $\mathcal{T}$ of witnesses of $\wit_{\sigma_0}$, obtained by truncating each witness at its first visit to a target of Player~1. Notice that each witness visits at least one target of Player~$1$ as $C_{\sigma_0} \neq \{(\infty, \ldots, \infty) \}$, which explains that this truncation always exists. This truncated tree is finite: $\mathcal{T}$ has at most $|C_{\sigma_0}|$ leaves that correspond to the histories $g$, prefixes of witnesses, such that $w(g) = c_{min}$, with $c \in C_{\sigma_0}$, and such that the first visit of $g$ of some target of Player~$1$ is in its last vertex $\Last{g}$. 
Notice that one leaf of $\mathcal T$ corresponds to some $g$ such that $w(g) = d_{min}$.

Let us construct the first Pareto-optimal cost $c^{(0)}$. Let $h_0$ be the history of maximal length such that $h_0 \prefix \rho$ and $w(h_0) = B$. This history $h_0$ exists because the arena is binary and $\cost{\rho} = d$ with $d_{min}$ satisfying (\ref{eq:d}). Moreover, as $\sigma_0$ is a solution, $h_0$ visits Player~$0$'s target. We consider the subtree $\mathcal{T}_0$ of $\mathcal T$ rooted in the last vertex of $h_0$. This subtree has at most $|C_{\sigma_0}|$ leaves as it is the case for $\mathcal T$. Its nodes with degree $\geq 2$ correspond to histories that are branching points\footnote{We recall that a branching point is a history which is the greatest common prefix of two witnesses.}. Moreover,  any two consecutive nodes with degree one along any branch of ${\mathcal T}_0$ are in the same region, as Player~$0$'s target is visited by $h_0$ and the first visit to a target of Player~$1$ is in a leaf of $\mathcal{T}_0$. As $\sigma_0$ is without cycle in the sense of Lemma~\ref{lem:cycle}, it follows that there are at most $|V|-1$
consecutive nodes with degree one along any branch of $\mathcal{T}_0$. 

Therefore, we can apply Lemma~\ref{lem:binarytree} to $\mathcal{T}_0$ with the parameters $n = |C_{\sigma_0}|$ and $\ell = |V|$. It follows that the leaves of $\mathcal{T}_0$ with the smallest depth have a depth $\leq \delta = |V| \cdot (\log_2(|C_{\sigma_0}|)+1)$. We set $c^{(0)}$ as the cost of a witness associated with one of those leaves. We get that $c^{(0)}_{min} \le B  + \delta$ because $\mathcal{T}_0$ is the subtree rooted at $h_0$ with $w(h_0) = B$. As $\sigma_0$ satisfies (\ref{eq:induction}) by hypothesis, we get that $\max\{c^{(0)}_i \mid c^{(0)}_i \ne \infty\} \le \max\{{c^{(0)}_{min},B}\} + 1 + \f{0}{t}$, that is,
\begin{eqnarray} \label{eq:c0}
 \max\{c^{(0)}_i \mid c^{(0)}_i \ne \infty\} \leq  B + \big(\delta + 1 + \f{0}{t}\big).
\end{eqnarray}

Let us construct the second Pareto-optimal cost $c^{(1)}$ as we did for $c^{(0)}$. Let $h_1$ be the history of maximal length such that $h_1 \prefix \rho$ and $w(h_1) = B + \delta + 1 + \f{0}{t}$ (notice that we use the bound of (\ref{eq:c0})). This history $h_1$ exists for the same reasons as for $h_0$. Let $\mathcal{T}_1$ be the subtree rooted in the last vertex of $h_1$ (it is a subtree of $\mathcal{T}_0$). We can apply Lemma~\ref{lem:binarytree} as for $\mathcal{T}_0$: the leaves of $\mathcal{T}_1$ with the smallest depth have a depth $\leq \delta$ .  Thus, we set $c^{(1)}$ as the cost of a witness associated with one of these leaves. We get that $c^{(1)}_{min} \le B + \delta + 1 + \f{0}{t}  + \delta$ by definition of $w(h_1)$. By (\ref{eq:induction}), we get
\begin{eqnarray}
 \max\{c^{(1)}_i \mid c^{(1)}_i \ne \infty\} \leq  B + 2\big(\delta + 1 + \f{0}{t}\big).
\end{eqnarray}
We also have that $\max \{ c^{(0)}_i \mid c^{(0)}_i \ne \infty \} < c^{(1)}_{min}$ as required in (\ref{eq:construction}). Indeed, $c^{(1)}$ and $d$ correspond to two different leaves of $\mathcal{T}_1$, and thus $c^{(1)}$ does not correspond to the root of $\mathcal{T}_1$. By definition of $h_1$, we get that $c^{(1)}_{min} > w(h_1) = B + \delta + 1 + \f{0}{t}$, and thus $c^{(1)}_{min} > \max \{ c^{(0)}_i \mid c^{(0)}_i \ne \infty \}$ by (\ref{eq:c0}).

As $d_{min}$ satisfies (\ref{eq:d}), we can repeat this process to construct the next costs $c^{(2)}, c^{(3)}, \ldots,$ until the last cost $c^{(2^{t+1})}$. This completes the construction of the announced sequence $(c^{(k)})_{k \in \{0,\ldots,2^{t+1}\}}$.

It remains to prove Inequality~(\ref{eq:Csigma0}). By hypothesis, the solution $\sigma_0$ satisfies (\ref{eq:induction}), that is, for all $c \in C_{\sigma_0}$, for all $i \in \{1,\ldots,t+1\}$, $c_i \le \max\{c_{min},B\} + 1 + \f{0}{t}$ or $c_i = \infty$. Therefore, we can write each $c \in C_{\sigma_0}$ as $c = c_{min}(1,\ldots,1) + d$ with $d_i \in \{0, \ldots, B + 1 + \f{0}{t}\} \cup \{\infty\}$ for all $i$. If two costs $c, c' \in C_{\sigma_0}$ are such that $c = c_{min}(1,\ldots,1) + d$ and $c' = c'_{min}(1,\ldots,1) + d$, with the same vector $d$, then they are comparable. This is impossible as  $C_{\sigma_0}$ is an antichain. Hence, the size of $C_{\sigma_0}$ is bounded by the number of vectors $d$, that is, by $\big( \f{0}{t}+B+3 \big)^{t+1}$, and (\ref{eq:Csigma0}) is then established.
\end{proof}

\subsection{Proof of Theorem~\ref{thm:bounded-costs}}

Finally, we gather all our intermediate results and combine them, in a way to complete the proof of Theorem~\ref{thm:bounded-costs}. Thanks to Lemmas~\ref{lem:basis}-\ref{lem:bounded}, calculations can be done in a way to have an explicit formula for $\f{B}{t}$ and a bound on its value.

\begin{proof}[Proof of Theorem~\ref{thm:bounded-costs}]
Let $G$ be a binary SP game with dimension $t$, $B \in \nat$, and $\sigma_0 \in \mbox{SPS}(G,B)$ be a solution. We assume that $C_{\sigma_0} \neq \{(\infty,\ldots, \infty)\}$, as Theorem~\ref{thm:bounded-costs} is trivially true in case $C_{\sigma_0} = \{(\infty,\ldots, \infty)\}$. 
By Lemmas~\ref{lem:basis}-\ref{lem:bounded}, there exists $\sigma'_0 \in \mbox{SPS}(G,B)$ without cycles such that $\sigma'_0 \preceq \sigma_0$ and 
$$\forall c' \in C_{\sigma'_0}, \forall i \in \{1,\ldots,t\}:~ c'_i \le \f{B}{t} ~\vee~ c'_i = \infty,$$ where
\begin{itemize}
    \item in dimension $t = 1$: $\f{B}{1} = B + |V|,$
    \item in dimension $t+1$ (under the induction hypothesis for $t$): 
        \begin{description}
            \item[(a)] $c'_i \leq \max\{c'_{min},B\} + 1 + \f{0}{t} ~\vee~ c'_i = \infty$,
            \item[(b)] $c'_{min} \le B + 2^{t+1}\big(|V|\cdot (\log_2(|C'_{\sigma_0}|)+1) + 1 + \f{0}{t}\big)$,
            \item[(c)] $|C'_{\sigma_0}| \le \big( \f{0}{t}+B+3 \big)^{t+1}$,
        \end{description}
        that is, 
        
        \centerline{$\begin{array}{ll}
        \f{B}{t+1} = \big(B + 2^{t+1}\big(|V|\cdot (\log_2(\alpha)+1) + 1 + \f{0}{t}\big)\big) + 1 + \f{0}{t}
        & (d)
        \end{array}$}
        
        with $\alpha = \big( \f{0}{t}+B+3 \big)^{t+1}$.
\end{itemize}

We have to prove Inequation (\ref{eq:fBounded}) holds, that is, 
\begin{eqnarray*} 
\f{B}{t} \le 2^{\Theta(t^2)}\cdot |V|^{\Theta(t)}\cdot (B+3)
\end{eqnarray*}
for all $t \geq 1$. This is true when $t=1$ by (\ref{eq:1}). Let us suppose that it is true for $t$, and let us prove that it remains true for $t+1$.
Let us begin with the factor $|V|\cdot(\log_2(\alpha)+1) + 1 + \f{0}{t}$ appearing in the bound on $c'_{min}$:
$$\begin{array}{lll} 
    &|V|\cdot(\log_2(\alpha)+1) + 1 + \f{0}{t}  \\
    &= |V| \cdot \big((t+1) \cdot \log_2(\f{0}{t}+B+3) + 1 \big) + 1 + \f{0}{t} &\mbox{as $\log_2(x^k) =  k \log_2(x)$}\\
    &\leq |V| \cdot \big( (t+1) \cdot (\f{0}{t}+B+3) + 1\big) + 1 + \f{0}{t} &\mbox{as } \log_2(x)\leq x\\
    &\leq (\f{0}{t} + B + 3)\cdot |V|\cdot(t+2) & \mbox{(e).}
\end{array}$$
Let us now compute a bound on $\f{B}{t+1}$:
$$\begin{array}{lll} 
    &\f{B}{t+1} \\
    &\leq B + 2^{t+1}\Big((\f{0}{t} + B + 3)\cdot |V|\cdot (t+2) + 1 + \f{0}{t} \Big) & \mbox{by (d) and (e)}\\
    &\leq 2^{t+1}\big(\f{0}{t} + B + 3\big)\cdot |V| \cdot (t+3) & \mbox{(f).} 
\end{array}$$
It follows that $\f{B}{t+1}$ can be computed thanks to $\f{0}{t}$ bounded by induction:
$$\begin{array}{lll}
   &\f{B}{t+1} \leq \\
    &\leq 2^{t+1}\big(3 \cdot 2^{\Theta(t^2)}\cdot |V|^{\Theta(t)} + B + 3\big)\cdot|V|\cdot(t+3) & \mbox{by (f) and (\ref{eq:O()})}\\
    &\leq 2^{\Theta((t+1)^2)}\cdot |V|^{\Theta(t+1)}\cdot(B+3) &\mbox{as } 3 \cdot t+3 \leq 2^{t+3}.
\end{array}$$
Therefore, Inequation (\ref{eq:fBounded}) holds for $\f{B}{t+1}$. This completes the proof of Theorem~\ref{thm:bounded-costs} by induction on $t$. 
\end{proof}

Thanks to Lemmas~\ref{lem:binary}, \ref{lem:bounded} and Theorem~\ref{thm:bounded-costs}, we easily get a bound for $|C_{\sigma_0}|$ depending on $G$ and $B$, as stated in the next corollary. 

\begin{corollary} \label{cor:Csigma0}
For all games $G \in \DimGame{t}$ and for all Pareto-bounded\footnote{The notion of Pareto-bounded solution has been defined below Theorem~\ref{thm:bounded-costs}.} solutions $\sigma_0 \in \mbox{SPS}(G,B)$, the size $|C_{\sigma_0}|$ is either equal to 1 or bounded exponentially by $2^{\Theta(t^3)}\cdot (|V| \cdot W)^{\Theta(t^2)}\cdot(B+3)^{\Theta(t)}$.
\end{corollary}

\begin{proof}
Suppose that $G$ is binary. Then 
$$|C_{\sigma_0}| \leq \big( \f{0}{t}+B+3 \big)^{t+1}$$
by Lemma~\ref{lem:bounded} and
$$\f{B}{t} \le 2^{\Theta(t^2)}\cdot |V|^{\Theta(t)}\cdot(B+3)$$
by (\ref{eq:fBounded}). Therefore, we have
$$\begin{array}{ll}
   |C_{\sigma_0}| &\leq \big( 2^{\Theta(t^2)}\cdot |V|^{\Theta(t)}\cdot 3 +B+3 \big)^{t+1} \\
   &\leq \big( 2^{\Theta(t^2)+2}\cdot |V|^{\Theta(t)} \cdot (B+3) \big)^{t+1} \\
   &\leq 2^{\Theta(t^3)}\cdot |V|^{\Theta(t^2)}\cdot(B+3)^{\Theta(t)}.
   \end{array}$$
In the general case, by Lemma~\ref{lem:binary}, $|V|$ is multiplied by $W$ in the last inequality. 
\end{proof}

In the sequel, we use the same notation $\f{B}{t}$ for any SP games, having in mind that $|V|$ has to be multiplied by $W$ when the game arena is not binary.

\section{Complexity of the SPS Problem} \label{sec:complexity}

In this section, we prove our main result: the SPS problem is \textsf{NEXPTIME}-complete (Theorem~\ref{thm:main}). It follows the same pattern as for Boolean reachability~\cite{DBLP:conf/concur/BruyereRT21}, however it requires the results of Section~\ref{sec:bounding} (which is meaningless in the Boolean case) and some modifications to handle quantitative reachability. 

\subsection{Finite-Memory Solutions}

We first show that if there exists a solution to the SPS problem, then there is one that is finite-memory and whose memory size is bounded exponentially. This intermediate step is necessary to prove that the SPS problem is in \textsf{NEXPTIME}: we will guess such a strategy and check that it is a solution in exponential time.

\begin{proposition} \label{prop:finiteMemory}
Let $G$ be an SP game, $B \in \nat$, and $\sigma_0 \in \mbox{SPS}(G,B)$ be a solution. Then there exists a Pareto-bounded solution $\sigma_0' \in \mbox{SPS}(G,B)$ such that $\sigma_0'$ is finite-memory and its memory size is bounded exponentially.
\end{proposition}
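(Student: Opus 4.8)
The plan is to start from a solution $\sigma_0 \in \SPS{G}{B}$ and first invoke the machinery of Section~\ref{sec:bounding}: by Theorem~\ref{thm:bounded-costs} (together with Lemma~\ref{lem:binary} so that we may reduce to a binary arena and translate back) we may assume $\sigma_0$ is \emph{bounded} and \emph{without cycles}, i.e.\ all finite components of costs in $C_{\sigma_0}$ are at most $\f{B}{t}$, which is exponential in the size of the instance. Moreover, by Proposition~\ref{prop:Csigma0}, $|C_{\sigma_0}|$ is either $1$ or exponentially bounded, hence so is the witness tree $\wit_{\sigma_0}$, which has $|C_{\sigma_0}|$ branches. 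The key structural observation is then that each witness $\rho\in\wit_{\sigma_0}$ has length $\length{\rho}$ bounded exponentially: because $\sigma_0$ is without cycles in the sense of Lemma~\ref{lem:cycle}, no region along $\rho$ before the last target visit can contain more than $|V|$ vertices, and the number of regions is bounded by the number of branching points (at most $|C_{\sigma_0}|$) times the number of target visits (at most $t+1$), so $\length{\rho} \le |C_{\sigma_0}| \cdot (t+1) \cdot |V|$ up to the final suffix where the cost/value is already attained.

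The next step is to build the Mealy machine. The memory will record, essentially, which node of the (exponentially sized, exponentially deep) witness tree $\wit_{\sigma_0}$ the current history has reached, i.e.\ the equivalence class of the current prefix together with its position in its region, as long as the play stays consistent with a witness. So the memory states are: the positions along the finitely many witnesses up to their lengths $\length{\rho}$, plus, once a deviation occurs, a bounded amount of extra information needed to keep playing a solution in the remaining subgame. For the ``post-deviation'' part we recurse: after a deviation $hv$ with $h\in\histi 1$, the subgame $G_{|hv}$ has strictly smaller dimension (fewer of Player~$1$'s targets remain relevant), and $\sigma_{0|hv}$ restricted there is itself a solution in $\SPS{G_{|hv}}{B}$; we apply the proposition inductively on the dimension $t$ to get a finite-memory, exponentially-sized solution on each such subgame, then splice these machines together. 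The base case $t=0$ (or $t=1$) is handled directly since then a solution need only reach Player~$0$'s target below $B$, for which an exponential-size — indeed, by Lemma~\ref{lem:basis}-style cycle elimination, bounded — finite-memory strategy suffices.

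Finally one bounds the total memory size: the top-level component has size $O(|C_{\sigma_0}|\cdot(t+1)\cdot|V\|\cdot W|)$, each deviation spawns a recursive machine for a dimension-$(t-1)$ game, and there are at most (exponentially many) deviation points, each contributing a machine of exponential size; since the recursion depth is $t$, the product is still a single exponential in $t$, $\log W$, $\log B$, and polynomial in $|V|$, which is what ``bounded exponentially'' means here. The main obstacle I expect is the bookkeeping in the inductive splicing: one must be careful that after a deviation the recursively obtained strategy genuinely yields a solution of $G$ (not just of the subgame) — this is where the fact that $h$ reaches a branching point / that the deviated-from witnesses still certify value $\le B$ for the surviving Pareto-optimal costs must be used, exactly in the spirit of the proof of Lemma~\ref{lem:improve} — and that the Pareto-optimal set of the glued strategy is still controlled, so that it remains bounded and the memory count does not blow up across the $t$ levels of recursion.
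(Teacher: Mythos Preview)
Your plan has a genuine gap in the treatment of deviations. You assert that after a deviation $hv$ the subgame $G_{|hv}$ has \emph{strictly smaller dimension}, but this is false in general: a deviation is simply a history $hv$ with $h$ prefix of some witness and $hv$ prefix of none, and nothing forces $h$ to have visited any target of Player~$1$. A deviation can occur at the very first move, so the subgame may still have dimension $t$ and your induction on $t$ is not well-founded. Relatedly, your claim that $\sigma_{0|hv}$ is itself a solution in $\SPS{G_{|hv}}{B}$ is not justified: a play $\rho$ through $hv$ may have a cost that is Pareto-optimal \emph{among plays with prefix $hv$} while being strictly dominated by some $c \in C_{\sigma_0}$ realised by a witness that does \emph{not} pass through $hv$; for such $\rho$ the original solution gives no guarantee on $\val{\rho}$. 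You flag this worry in your last paragraph, but Lemma~\ref{lem:improve} does not resolve it: that lemma \emph{assumes} $\tau_0 \in \SPS{G_{|h}}{B}$, which is exactly what is in doubt.

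The paper handles deviations quite differently, without any recursion on $t$. The key observation is that at a deviation $hv$, the restricted strategy $\sigma_{0|hv}$ forces every consistent play $\pi$ to satisfy the disjunction ``$\val{\pi}\le B$'' \emph{or} ``$\cost{\pi} > c$ for some $c\in C_{\sigma_0}$'' (the second disjunct meaning $\pi$ is not Pareto-optimal in the full game). This is a winning condition in a \emph{zero-sum} game on an arena extended by the truncated record $\mem{g}=(w(g),\val{g},\cost{g})$, with values capped at $\f{B}{t}$; the objective is a disjunction of a reachability and a safety condition, for which memoryless winning strategies exist on the extended arena. Pulling back gives a finite-memory \emph{punishing strategy} $\punStrat{v,\mem{hv}}$ of exponential size that depends only on $(v,\mem{hv})$, hence there are only exponentially many of them. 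Splicing these punishing strategies onto the (lasso-shaped) witness tree yields the desired exponential-memory solution. Your witness-tree bookkeeping (regions, branching points, bounding $\length{\rho}$) is in the right spirit for that part of the construction, but the recursive scheme for deviations needs to be replaced by this zero-sum punishing argument.
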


When $C_{\sigma_0} \neq \{(\infty,\ldots, \infty)\}$, the proof of this proposition is based on the following principles, that are detailed below (the case $C_{\sigma_0} = \{(\infty,\ldots, \infty)\}$ is treated separately):

\begin{itemize}
\item We first transform the arena of $G$ into a binary arena and adapt the given solution $\sigma_0 \in \mbox{SPS}(G,B)$ to the new game. We keep the same notations $G$ and $\sigma_0$. We can suppose that $\sigma_0$ is Pareto-bounded by Theorem~\ref{thm:bounded-costs}. We consider a set of witnesses $\wit_{\sigma_0}$.
\item We show that at any deviation\footnote{We recall that a deviation is a history $hv$ with $h \in \histi{1}$, $v \in V$, such that $h$ is prefix of some witness, but $hv$ is prefix of no witness.}, Player~$0$ can switch to a \emph{punishing strategy} that imposes that the consistent plays $\pi$ either satisfy $\val{\pi} \leq B$ or $\cost{\pi}$ is not Pareto-optimal. 
Moreover, this punishing strategy is finite-memory with an exponential memory.
\item We then show how to transform the witnesses into lassos, and how they can be produced by a finite-memory strategy with exponential memory. We also show that we need at most exponentially many different punishing strategies.
\end{itemize}
In this way, we get a strategy solution to the SPS problem whose memory size is exponential.\medskip

\subsection{Punishing Strategies}

Let $\sigma_0$ be a Pareto-bounded solution to the SPS Problem. By Theorem~\ref{thm:bounded-costs} (\ref{eq:f}) (\ref{eq:fBounded}), we get that $c_i \leq \f{B}{t}$ or $c_i = \infty$ for all $c \in C_{\sigma_0}$ and all $i \in \{1,\ldots,t\}$, such that $\f{B}{t}$ is exponentially bounded. Moreover, if a play $\rho$ is Pareto-optimal, then $\val{\rho} \leq B$. We define for each history $g \in \hist_{\sigma_0}$ its \emph{record} $\mem{h} =$ $(w(h),\val{h}, \cost{h})$ whose values are \emph{truncated} to $\infty$ if they are greater than $\f{B}{t}$. 

 We define for each deviation $hv$ a punishing strategy $\punStrat{v,\mem{hv}}$ as stated in the next lemma. 

\begin{lemma} \label{lem:punish}
Let $G$ be an SP game, $B \in \nat$, and $\sigma_0 \in \SPS{G}{B}$ be a Pareto-bounded solution. Suppose $C_{\sigma_0} \neq \{(\infty,\ldots, \infty)\}$. Let $hv$ be a deviation such that $\val{h} = \infty$ (resp. $\val{h} < \infty$). Then there exists a finite-memory strategy $\punStrat{v,\mem{hv}}$ with an exponential memory size (resp. with size~$1$) such that $\sigma_0' = \sigma_0[hv \rightarrow \punStrat{v,\mem{hv}}]$ is also a solution in $\mbox{SPS}(G,B)$.
\end{lemma}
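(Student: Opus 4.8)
The goal is, at any deviation $hv$, to let Player~$0$ switch to a strategy $\punStrat{v,\mem{hv}}$ that, from $v$ on, forces every consistent play $\pi$ (with prefix $hv$) either to satisfy $\val{\pi}\le B$ or to have a non-Pareto-optimal cost with respect to $C_{\sigma_0'}$, while keeping the memory size under control. The starting observation is that, since $hv$ is a deviation, the play $hv\cdot\pi'$ is \emph{not} consistent with the original $\sigma_0$; but $\sigma_0$ itself, being a solution in $\SPS{G}{B}$, already guarantees that after any actual Player~$0$ history the resulting plays are "safe" in the above sense. So the natural candidate for the punishing strategy is essentially the restriction $\sigma_{0|hv}$ — except that this is a full strategy tree, not a finite-memory object, and we must show it can be collapsed to an exponential-memory Mealy machine \emph{and} that doing so via the record $\mem{hv}$ rather than the full history $hv$ still yields a solution.

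First I would handle the easy case $\val{h}<\infty$ (equivalently $\val{hv}<\infty$): here Player~$0$'s target has already been visited along $h$, so every play extending $hv$ already has value $\le \val{h}$, and $\val{h}\le B$ because $h$ is a prefix of a witness of the bounded solution $\sigma_0$ (so its value is below $B$). Hence \emph{any} strategy from $v$ works — in particular a memoryless one — giving a punishing strategy of memory size~$1$, and $\sigma_0[hv\to\punStrat{v,\mem{hv}}]$ is trivially still a solution since the only newly-created plays already have value $\le B$. Second, for the hard case $\val{h}=\infty$, the target $T_0$ has not yet been reached. I would argue that, because $\sigma_0$ is a \emph{bounded} solution, what matters about the history $hv$ for the future behavior is only its record $\mem{hv} = (w(hv),\val{hv},\cost{hv})$ with components truncated to $\top$ above $\f{B}{t}$: two histories with the same record and the same last vertex $v$ are interchangeable as far as the "safe" predicate on extensions is concerned — a play $\pi'$ from $v$ is safe after $hv$ iff its weight-shifted value $w(hv)+w(\pi')$ eventually dips below $B$ (reaching $T_0$) or its shifted cost $\cost{hv}+\cost{\pi'}$ fails to lie in $C_{\sigma_0'}$. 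Since $\sigma_0$ is a solution, for each record value $r$ and vertex $v$ there \emph{exists} a strategy from $v$ making all extensions safe (namely $\sigma_{0|h'v}$ for any $h'v$ with $\mem{h'v}=r$), and I would fix one such witnessing strategy and call it $\punStrat{v,r}$.

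The remaining work — and the main obstacle — is the \emph{finite-memory with exponential bound} claim for $\punStrat{v,\mem{hv}}$ in the case $\val{h}=\infty$. The plan is to apply Theorem~\ref{thm:bounded-costs}/the region-decomposition machinery to the subgame rooted at $v$: one reinterprets "safe from $v$ with record $r$" as an $\SPS{}{}$-type condition in a derived game (the arena with initial vertex $v$, the already-visited Player~$1$ targets removed, bound $B - w(hv)$ or $B=\infty$-style when $T_0$ not yet seen), invokes Lemma~\ref{lem:improve} and Lemma~\ref{lem:cycle} to make the punishing strategy cycle-free and bounded, and then observes that a bounded cycle-free solution needs only to remember a truncated record — value, weight, and the $t$-dimensional cost each capped at $\f{B}{t}$ — which gives a Mealy machine with at most $(\f{B}{t}+2)^{t+2}\cdot|V|$ states, exponential by the explicit bound on $\f{B}{t}$ from Theorem~\ref{thm:bounded-costs}. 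Finally, to conclude that $\sigma_0' = \sigma_0[hv\to\punStrat{v,\mem{hv}}]$ is still in $\SPS{G}{B}$, I would check: plays not extending $hv$ are unaffected and were safe; plays extending $hv$ are exactly the plays of $\punStrat{v,\mem{hv}}$ prefixed by $hv$, and these are safe by construction of the punishing strategy; and replacing $\sigma_0$ by $\sigma_0'$ can only \emph{remove} or \emph{lower} Pareto-optimal costs (the new plays past a deviation were previously absent), so $C_{\sigma_0'}\le C_{\sigma_0}$ and the value bound is preserved — exactly the pattern already used in the proofs of Lemmas~\ref{lem:improve} and~\ref{lem:cycle}. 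I expect the delicate point to be making precise that the truncated record $\mem{hv}$ carries enough information: one must be careful that once a coordinate of the cost (or the value, or the weight) exceeds $\f{B}{t}$, its exact value is genuinely irrelevant to Pareto-optimality in the bounded solution, which is where Theorem~\ref{thm:bounded-costs} is used in an essential way.
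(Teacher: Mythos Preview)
Your treatment of the case $\val{h}<\infty$ matches the paper and is fine. The gap is in the hard case $\val{h}=\infty$: you propose to cast the punishing condition as ``an $\SPS{}{}$-type condition in a derived game'' and then invoke Lemma~\ref{lem:improve}, Lemma~\ref{lem:cycle}, and Theorem~\ref{thm:bounded-costs} to obtain a cycle-free bounded solution in that subgame. But the condition that $\sigma_{0|hv}$ enforces --- every continuation $\pi$ satisfies $\val{\pi}\le B$ \emph{or} $\cost{\pi}>c$ for some $c\in C_{\sigma_0}$ --- is \emph{not} an SPS condition: the set $C_{\sigma_0}$ is an externally fixed antichain, not the Pareto frontier of the subgame strategy. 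In particular $\sigma_{0|hv}$ need not belong to $\SPS{G_{|hv}}{B}$ at all (a play through $hv$ may be Pareto-optimal \emph{in the subgame} with value $>B$, yet harmless in $G$ because its cost is dominated by a witness that avoids $hv$). So Lemmas~\ref{lem:improve}/\ref{lem:cycle} and Theorem~\ref{thm:bounded-costs}, which are stated and proved for SPS solutions, give you no handle on $\sigma_{0|hv}$, and your route to the exponential memory bound does not go through.

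What the paper does instead is recognise the punishing condition for what it is: a \emph{zero-sum} winning condition that is the disjunction of a reachability objective (reach $m_2\le B$) and a safety objective (stay in $\{m_3>c \text{ for some } c\in C_{\sigma_0}\}$), played on the product arena whose vertices are $(v,(m_1,m_2,m_3))$ with each component truncated at $\f{B}{t}$. This product arena has exponential size precisely because $\sigma_0$ is bounded (Theorem~\ref{thm:bounded-costs}), and for zero-sum games with a reachability-or-safety objective it is known that memoryless winning strategies exist. Since $\sigma_{0|hv}$ is winning in this game, a memoryless winning strategy on the product exists, which translates back to an exponential-memory strategy $\punStrat{v,\mem{hv}}$ on $G$. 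Your intuition that ``only the truncated record matters'' is exactly right, but the missing ingredient is this external memoryless-determinacy result for the disjunctive objective, not the SPS machinery.
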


\begin{proof}
Let $hv$ be a deviation such that $\val{h} = \infty$, that is, $h$ does not visit Player~$0$'s target. As $\sigma_0$ is a solution, notice that $\sigma_{0|hv}$ imposes to each consistent play $\pi$ in $G_{|hv}$ to satisfy $\val{\pi} \leq B$ or $\cost{\pi} > c$ for some $c \in C_{\sigma_0}$. We are going to replace $\sigma_{0|hv}$ by a winning strategy in a zero-sum game $H$ with an exponential arena and an omega-regular objective\footnote{We assume that the reader is familiar with the concept of zero-sum game with an omega-regular objective, see e.g. \cite{2001automata}.} that is equivalent to what $\sigma_{0|hv}$ imposes to plays. The arena of $H$ is the arena of $G$ extended with the record $\mem{g}$ of the current history $g$. More precisely, vertices are of the form $(v,(m_1,m_2,m_3))$ with $v \in V$, $m_1,m_2 \in \{0,\ldots, \f{B}{t}\} \cup \{\infty\}$, and $m_3 \in \big(\{0,\ldots, \f{B}{t}\} \cup \{\infty\}\big)^t$, such that, whenever $v$ belongs to some target, the weight component $m_1$ allows to update the (truncated) $\textsf{val}$ component $m_2$ and the (truncated) $\textsf{cost}$ component $m_3$. The initial vertex of $H$ is equal to $(v,\mem{hv})$. This arena is finite and of exponential size by the way the values of $(m_1,m_2,m_3)$ are truncated and by Theorem~\ref{thm:bounded-costs} (\ref{eq:fBounded}). The omega regular objective of $H$ is the disjunction between
\begin{itemize}
\item 
the reachability objective $\{(v,(m_1,m_2,m_3)) \mid m_2 \leq B\}$, and \item 
the safety objective $\{(v,(m_1,m_2,m_3)) \mid m_3 > c $ for some $c \in C_{\sigma_0}\}$.
\end{itemize}
It is known, see e.g.~\cite{BruyereHR18}, that if there exists a winning strategy for zero-sum games with an objective which is the disjunction of a reachability objective and a safety objective, then there exists one that is memoryless. This is the case here for the extended game $H$: as $\sigma_{0|hv}$ is winning in $H$, there exists a winning memoryless strategy in $H$, and thus a winning finite-memory strategy $\punStrat{v,\mem{hv}}$ with exponential size in the original game. Therefore, the strategy $\sigma_0[hv \rightarrow \punStrat{v,\mem{hv}}]$ is again a solution in $\mbox{SPS}(G,B)$.

Suppose now that $hv$ is a deviation with $\val{h} < \infty$. As $h$ already visits Player~$0$'s target, we can use, in place of $\sigma_{0|hv}$, any memoryless strategy $\punStrat{v,\mem{hv}}$ as deviating strategy.
\end{proof}

\subsection{Lasso Witnesses}

To get Proposition~\ref{prop:finiteMemory}, we show that one can play with an exponential finite-memory strategy over the witness tree and an exponential number of deviating strategies. The main idea of the proof is to transform each witness $\rho$ into a \emph{lasso}: as the solution $\sigma_0$ is Pareto-bounded, in the region decomposition of $\rho$, each region contains no cycle, except the last one; in the last region, as soon as a vertex is repeated, we replace the suffix of $\rho$ by the infinite repetition of this cycle. We will show that the number of regions traversed by these lasso witnesses is exponential.

\begin{proof}[Proof of Proposition~\ref{prop:finiteMemory}]
(1) We first suppose that $C_{\sigma_0} \neq \{(\infty,\ldots, \infty)\}$. According to Theorem~\ref{thm:bounded-costs}, from $\sigma_0 \in \mbox{SPS}(G,B)$, one can construct a solution $\sigma_0' \in \mbox{SPS}(G,B)$ that is Pareto-bounded, thus without cycles. Consider a set of witnesses $\wit_{\sigma'_0}$  for this strategy, and the region decomposition of its witnesses. Along any witness $\rho$, according to Lemma~\ref{lem:cycle}, it is thus impossible to eliminate cycles. Recall that Lemma~\ref{lem:cycle} only considers cycles implying histories in the same region, those cycles with a null weight before visiting Player~$0$'s target; moreover, the last region of $\rho$ is excluded. Let us study the memory used by $\sigma'_0$. Along a witness $\rho$:
\begin{itemize}
    \item As $\val{\rho} \leq B$ and there is no cycle with a null weight before visiting Player~$0$'s target, it follows that the smallest prefix $h$ of $\rho$ such that $\val{h} = \val{\rho}$ has a length $|h| \leq |V|\cdot B \cdot W$ (the presence of $W$ comes from the fact that the arena has been made binary).
    \item Once Player~$0$'s target is visited and before the last region, $\sigma'_0$ is ``locally'' memoryless inside each region, as there is no cycle. 
    \item In the last region, as soon as a vertex is repeated, we replace $\sigma'_0$ by a memoryless strategy that repeats this cycle forever. We then get a lasso replacing $\rho$ that has the same value and cost. We keep the same notation $\rho$ for this lasso. We also keep the notation $\wit_{\sigma'_0}$ for the set of lasso witnesses. Notice that for deviations $hv$ such that $h$ belongs to the last region of a witness, the strategy $\punStrat{v,\mem{hv}}$ is memoryless (as $\val{h} < \infty$, see Lemma~\ref{lem:punish}). Hence the modification of the witnesses has no impact on the deviating strategies.
\end{itemize}

Let us study the memory necessary for $\sigma'_0$ in order to produce the new set $\wit_{\sigma'_0}$: 
\begin{itemize}
\item The number of regions traversed by a witness $\rho$ is bounded by $(t+2) \cdot |\wit_{\sigma'_0}|$. Indeed, the number of visited targets increases from $0$ to $t+1$ and the set $\wit_{\sigma'_0}(h)$ decreases until being equal to $\{\rho\}$. As there are $|C_{\sigma'_0}|$ witnesses, the total number of traversed regions is bounded by $(t+2) \cdot |C_{\sigma'_0}|^2$ which is exponential by Corollary~\ref{cor:Csigma0}. 
\item On the first regions traversed by a witness $\rho$, $\sigma'_0$ has to memorize the current history $h \prefix \rho$ until $|h| = |V|\cdot B \cdot W$, which needs an exponential memory. Then, on each other region traversed by $\rho$, $\sigma'_0$ is locally memoryless. Therefore, the memory necessary for $\sigma'_0$ to produce the lasso witnesses is exponential, as there is an exponential number of regions on which $\sigma'_0$ needs an exponential memory.
\end{itemize}

It remains to explain how to play with a finite-memory strategy in case of deviations from the witnesses. With each history $g$, we associate its record $\mem{g}$ whose second component indicates whether $\val{g} < \infty$ or $\val{g} = \infty$. Let $hv$ be a deviation. We apply Lemma~\ref{lem:punish} to replace $\sigma'_{0|hv}$ by the punishing strategy $\punStrat{v,\mem{hv}}$. According to $\val{h}$, this punishing strategy is either finite-memory with an exponential size or it is memoryless. Notice that given a deviation $hv$, a punishing strategy only depends on the last vertex $v$ of $hv$ and its record $\mem{hv}$. Therefore, we have an exponential number of punishing strategies by Theorem~\ref{thm:bounded-costs}.
 
All in all, when $C_{\sigma_0} \neq \{(\infty,\ldots, \infty)\}$, we get from $\sigma_0$ a solution to the SPS problem that uses a memory of exponential size.

\bigskip
(2) Suppose that $C_{\sigma_0} = \{(\infty,\ldots, \infty)\}$. This means that each play $\rho \in \play_{\sigma_0}$ visits no target of Player~$1$ and that $\sigma_0$ can impose to each such $\rho$ to have a value $\val{\rho} \leq B$. As done in the proof of Lemma~\ref{lem:punish}, we can replace $\sigma_{0}$ by a strategy corresponding to a winning strategy in a zero-sum game $H$ with an exponential arena and a reachability objective. The arena of $H$ has vertices of the form $(v,(m_1,m_2))$ with $v \in V$, $m_1,m_2 \in \{0,\ldots, B\} \cup \{\infty\}$, such that, whenever $v$ belongs to Player~$0$'s target, the weight component $m_1$ allows to update the \textsf{val} component $m_2$. The initial vertex of $H$ is equal to $(v_0,0,\val{v_0})$. This arena has an exponential size and its reachability objective is $\{(v,(m_1,m_2)) \mid m_2 \leq B\}$. It is known that when there is a winning strategy in a zero-sum game with a reachability objective, then there is one that is memoryless~\cite{2001automata}. Hence, coming back to $G$, $\sigma_0$ can be replaced by a finite-memory strategy with exponential memory.
\end{proof}

\subsection{\textsf{NEXPTIME}-Completeness}

We finally prove that the SPS problem is \textsf{NEXPTIME}-complete. For the \textsf{NEXPTIME}-membership, the idea is to guess a strategy $\sigma_0$ given by a Mealy machine of exponential size (by Proposition~\ref{prop:finiteMemory}) and then to verify in exponential time that it is a solution. The \textsf{NEXPTIME}-hardness follows from the \textsf{NEXPTIME}-completeness of the Boolean variant of the SPS problem~\cite{DBLP:conf/concur/BruyereRT21}.

\begin{proof}[Proof of Theorem~\ref{thm:main}]
Let us first study the \textsf{NEXPTIME}-membership. Let $G$ be an SP game and $B\in \nat$. If there exists a solution, Proposition~\ref{prop:finiteMemory} states the existence of a solution $\sigma_0 \in \SPS{G}{B}$ that uses a finite memory bounded exponentially. 
We can guess such a strategy $\sigma_0$ as a Mealy machine $\mathcal{M}$ with a set $M$ of memory states at most exponential in the size of the instance. This can be done in exponential time. Let us explain how to verify in exponential time that the guessed strategy $\sigma_0$ is a solution to the SPS problem.

We make the following \emph{important observation}: it is enough to consider Pareto-optimal costs $c$ whose components $c_i < \infty$ belong to $\{0, \ldots, |V| \cdot |M| \cdot t \cdot W\}$. Let us explain why: 
\begin{itemize}
    \item Consider the cartesian product $G \times \mathcal{M}$ whose infinite paths are exactly the plays consistent with $\sigma_0$. This product has an arena of size $|V| \cdot |M|$ where Player~$1$ is the only player to play. The Pareto-optimal costs are among the costs of plays $\rho$ in $G \times \mathcal{M}$ that have no cycle with positive weight between two consecutive visits of Player~$1$'s targets\footnote{or between the initial vertex and the first visit to some Player~$1$'s target.}.
    \item Consider now a Pareto-optimal play $\rho$ with a cycle of null weight between two consecutive visits of Player~$1$'s targets. Then there exists another play $\rho'$, with the same cost as $\rho$, that is obtained by removing this cycle. Moreover, if $\val{\rho} = \infty$, then $\val{\rho'} = \infty$. 
    \item {Therefore, it is enough to consider plays $\rho$ such that for $h \prefix \rho$ with length $|h| = |V| \cdot |M| \cdot t$, we have $\cost{\rho} = \cost{h}$ (the worst case happens when $\rho$ visit all Player~$1$'s targets each of them separated by a longest path without cycle). It follows that $\cost{\rho} \leq |V| \cdot |M| \cdot t \cdot W$.}
\end{itemize}

From the previous observation, let us explain how to compute the set $C_{\sigma_0}$ of Pareto-optimal costs, and then how to check that $\sigma_0$ is a solution.
\begin{itemize}
    \item First, we further extend the vertices of $G \times \mathcal{M}$ to keep track of the weight, value and cost of the current history, truncated to $\infty$ when they are greater than $\alpha = \max\{B,|V| \cdot |M| \cdot t \cdot W\}$). That is, we consider an arena $H$ whose vertices are of the form $(v,s,(m_1,m_2,m_3))$ with $v \in V$, $s \in M$, $m_1,m_2 \in \{0,\ldots,\alpha\} \cup \{\infty\}$, and $m_3 \in (\{0,\ldots,\alpha\} \cup \{\infty\})^t$. As in the proof of Lemma~\ref{lem:punish}, whenever $v$ belongs to some target, the weight component $m_1$ allows to update the \textsf{val} component $m_2$ and the \textsf{cost} component $m_3$. The initial vertex is $(v_0,s_0,(0,\val{v_0},\cost{v_0}))$ where $s_0$ is the initial memory state of $\mathcal M$. 
    \item Then, to compute $C_{\sigma_0}$, we test for the existence of a play $\rho \in \play_{\sigma_0}$ with a given cost $c = \cost{\rho}\in (\{0,\ldots,\alpha\} \cup \{\infty\})^t$, beginning with the smallest possible cost $c = (0,\ldots,0)$, and finishing with the largest possible one $c = (\infty,\ldots,\infty)$. Deciding the existence of such a play $\rho$ for cost $c$ corresponds to deciding the existence of a play in the extended arena $H$ that visits some vertex $(v,s,(m_1,m_2,m_3))$ with $m_3 = c$. This corresponds to a reachability objective that can be checked in polynomial time in the size of $H$, thus in exponential time in the size of the given instance. Therefore, as there is at most an exponential number of costs $c$ to consider, the set $C_{\sigma_0}$ can be computed in exponential time.
    \item Finally, we check whether $\sigma_0$ is \emph{not a solution}, i.e., there exists a play $\rho$ in $H$ with a cost $c \in C_{\sigma_0}$ such that $\val{\rho} > B$.  We 
    remove from $H$ all vertices $(v,s,(m_1,m_2,m_3))$ such that $m_2 \leq B$, and we then check the existence of a play with a cost $c \in C_{\sigma_0}$ as done in the previous item. As this can be done in exponential time, checking that $\sigma_0$ is a solution can thus be done in exponential time. 
\end{itemize}

Let us finally study the \textsf{NEXPTIME}-hardness of the SPS problem. In~\cite{DBLP:conf/concur/BruyereRT21}, the Boolean variant of the SPS problem is proved to be \textsf{NEXPTIME}-complete. It can be reduced to its quantitative variant by labeling each edge with a weight equal to $0$ and by considering a bound $B$ equal to $0$. Hence the \textsf{val} and \textsf{cost} components are either equal to $0$ or $\infty$. It follows that the (quantitative) SPS problem is \textsf{NEXPTIME}-hard.
\end{proof}

\section{Conclusion and Future Work}

In \cite{DBLP:conf/concur/BruyereRT21}, the SPS problem is proved to be \textsf{NEXPTIME}-complete for Boolean reachability. In this paper, we proved that the same result holds for quantitative reachability (with non-negative weights). The difficult part was to show that when there exists a solution to the SPS problem, there is one whose Pareto-optimal costs are exponentially bounded.

In this paper, we suppose that there is one weight function $w : E \rightarrow \mathbb{N}$ used to compute the value and the cost of any play $\rho$, i.e. to compute $\costT{T_0}{\rho}$ and $(\costT{T_1}{\rho}, \ldots, \costT{T_t}{\rho})$. We could have considered the more general case where there are $t+1$ weight functions $w_i$, $i \in \{0,1, \ldots t\}$, and each cost $\costT{T_i}{\rho})$ is computed thanks to $w_i$. Unfortunately, the techniques developed in this paper cannot be adapted to this case. Indeed, Lemma~\ref{lem:cycle} allows to eliminate cycles, with the restriction that the cycle has a null weight as long as Player~$0$'s target $T_0$ is not visited yet. It follows that in the proof of Proposition~\ref{prop:finiteMemory} where we transform each witness into a lasso by eliminating cycles, we can guarantee that this lasso visits $T_0$ along its prefix $h$ with length $|V|\cdot B\cdot W$. In the case of games with $t+1$ (dissociated) weight functions $w_i$, we do not see how to bound the length of such a prefix $h$.

This seems like a challenging problem to solve, in view of some very recent results obtained in~\cite{Christophe}. The \emph{rational synthesis} problem is investigated in~\cite{Christophe} such that the environment is composed of several players whose rational responses are to play an NE (instead of one player with several targets following a play with a Pareto-optimal cost). Both concepts of rationally coincide in the following case: on one hand, an environment composed of one player playing an NE, on the other hand an environment composed of one player with one target following a play with a Pareto-optimal cost. The reference~\cite{Christophe} uses several weight functions, one for each player, and the obtained results are: (1) when the environment is composed of one player, the rational synthesis problem is \textsf{PSPACE}-hard and in \textsf{EXPTIME},
(2) this problem is in general \textsf{EXPTIME}-hard and it is unknown whether it is decidable. For the SPS problem with several weight functions, we thus have the following results: (1) when Player~$1$ has one target, the SPS problem is \textsf{PSPACE}-hard and in \textsf{EXPTIME}, (2) this problem is in general \textsf{NEXPTIME}-hard (as a consequence of our Theorem~\ref{thm:main}.

In~\cite{Christophe}, weight functions with positive and negative weights have been investigated: it is proved that in both cases of rationality (NE/Pareto-optimality), the synthesis problem becomes undecidable as soon as negative weights are allowed. The proof requires to use several weight functions. It would be interesting to study whether this problem remains undecidable in case of one weight function $w$.  

Considering multiple objectives for Player~$0$ (instead of one) is also an interesting problem to study. The order on the tuples of values becomes partial. In this case, we could impose a threshold on each value component.

It is well-known that quantitative objectives make it possible to model richer properties than with Boolean objectives. This paper studied quantitative reachability. It would be very interesting to investigate the SPS problem for other quantitative payoffs, like mean-payoff or discounted sum.

\bibliography{biblio}

\end{document}